\newtheorem{theorem}{Theorem}
\newtheorem{lemma}{Lemma}
\newtheorem{proposition}{Proposition}
\newtheorem{assumption}{Assumption}
\newtheorem{Remark}{Remark}
\DeclareMathOperator{\diag}{diag}  
\DeclareMathOperator{\rank}{rank}
\title{\LARGE \bf
Byzantine-Resilient Distributed Optimization of \\ Multi-Dimensional Functions
}
\author{Kananart Kuwaranancharoen, Lei Xin, Shreyas Sundaram
\thanks{This research was supported by NSF CAREER award 1653648. The authors are with the School of Electrical and Computer Engineering at Purdue University.  Email: {\tt\{kkuwaran,lxin,sundara2\}@purdue.edu}. Both of the first two authors contributed equally to this paper.}% <-this % stops a space
}
\begin{document}
    \title{\LARGE \bf Learning the Dynamics of Autonomous Linear Systems From Multiple Trajectories}
  \author{~Lei~Xin, George Chiu, Shreyas Sundaram 
\thanks{This research was supported by USDA grant 2018-67007-28439.  This work represents the opinions of the authors and not the USDA or NIFA. Lei Xin and Shreyas Sundaram are with the Elmore Family School of Electrical and Computer Engineering, Purdue University. George Chiu is with the School of Mechanical Engineering, Purdue University. E-mails: {\tt\{lxin, gchiu, sundara2\}@purdue.edu}.
}
}

% ====================================================================
\maketitle

% === ABSTRACT ====================================================================
% =================================================================================
\begin{abstract}
%\boldmath
We consider the problem of learning the dynamics of autonomous linear systems (i.e., systems that are not affected by external control inputs) from observations of multiple trajectories of those systems, with finite sample guarantees. Existing results on learning rate and consistency of autonomous linear system identification rely on observations of steady state behaviors from a single long trajectory, and are not applicable to unstable systems. In contrast, we consider the scenario of learning system dynamics based on multiple short trajectories, where there are no easily observed steady state behaviors. We provide a finite sample analysis, which shows that the dynamics can be learned at a rate $\mathcal{O}(\frac{1}{\sqrt{N}})$ for both stable and unstable systems, where $N$ is the number of trajectories, when the initial state of the system has zero mean (which is a common assumption in the existing literature). We further generalize our result to the case where the initial state has non-zero mean. We show that one can adjust the length of the trajectories to achieve a learning rate of $\mathcal{O}(\sqrt{\frac{\log{N}}{N})}$ for strictly stable systems and a learning rate of $\mathcal{O}(\frac{(\log{N})^d}{\sqrt{N}})$ for marginally stable systems, where $d$ is some constant. 
\end{abstract}

% === KEYWORDS ====================================================================
% =================================================================================

% === I. INTRODUCTION =============================================================
% =================================================================================
\section{Introduction} \label{sec: introduction}
%The problem of identifying the dynamics of systems has been an important topic in various fields including economics, control theory and reinforcement learning \cite{ljung1999system,isermann2010identification}. 
The system identification problem is to learn the parameters of a dynamical system, given the measurements of the inputs and outputs. While classical system identification techniques focused primarily on achieving asymptotic consistency \cite{bauer1999consistency,jansson1998consistency,knudsen2001consistency}, recent efforts have sought to characterize the number of samples needed to achieve a desired level of accuracy in the learned model. In particular, non-asymptotic analysis of system identification based on a single trajectory has been studied extensively over the past few years \cite{simchowitz2018learning,oymak2018non,simchowitz2019learning,sarkar2019finite,faradonbeh2018finite,sarkar2019near}.

In practice, performing system identification using a single trajectory could be problematic for many applications. For example, having the system run for a long time could incur risks when the system is unstable. Furthermore, only historical snippets of data about the system may be available, without the ability to easily observe long-run behavior. Additionally, in settings where one has the ability to restart the system or have several copies of the system running in parallel, one may obtain \textit{multiple} trajectories generated by the system dynamics  \cite{xing2020linear}. The paper \cite{dean2019sample} studies the sample complexity of identifying a system whose state is fully measured using only the final data points from multiple trajectories. Using a similar setup, the paper \cite{fattahi2018data} explores the benefits of adding an $\ell_{1}$ regularizer. The paper \cite{sun2020finite} studies the sample complexity of partially-measured system identification by including nuclear norm regularization, again only using the final samples from each trajectory. For partially-measured systems, the paper \cite{zheng2020non} allows for more efficient use of data. As mentioned in \cite{zheng2020non}, compared to the single trajectory setup, the multiple trajectories setup usually allows for more direct application of concentration inequalities due to the assumption of independence over multiple trajectories.

In addition to the potential lack of single long trajectories, in many settings we may not be able to actually apply inputs to the system in order to perform system identification; this could be due to the costs of applying inputs, or due to the fact that we are simply observing an autonomous system that we cannot control. The uncontrolled system may also be serving as a subsystem connected to the main system that one wants to control, and having a better model of the subsystem could be useful in controlling the main system. For partially-measured systems, the characterization of finite sample error of purely stochastic systems (systems that are entirely driven by unmeasurable noise) is more challenging as indicated in \cite{tsiamis2019finite}. There, the goal is to estimate the system matrices as well as the steady state Kalman filter gain of the corresponding system. The paper \cite{tsiamis2019finite} shows that classical stochastic system identification algorithm can achieve a learning rate of $\mathcal{O}(\frac{1}{\sqrt{\bar{N}}})$ (up to logarithmic factors) for both strictly stable and marginally stable systems, where $\bar{N}$ denotes the number of samples in a single trajectory.

In this paper, we are motivated by the challenge of system identification for partially-measured and autonomous stochastic linear systems (with no controlled inputs) as in \cite{tsiamis2019finite}, but for the case where a single long-run trajectory is unavailable. Existing results on consistency and learning rate of stochastic system identification algorithms (including \cite{tsiamis2019finite}) typically convert the original system to a statistically equivalent form of the Kalman filter that is assumed to have reached steady state \cite{bauer1999consistency,deistler1995consistency,tsiamis2019finite}. The analysis is then performed by assuming that the covariance matrix of the initial state of the system is the same as the steady state Kalman filter error covariance matrix, which simplifies the analysis. We note, however, that this assumption is invalid when one has no long run observation of the system trajectory, since it is in general unclear how long one should wait until the Kalman filter “converges” (even if it converges exponentially fast) for an unknown system. Furthermore, the available short trajectories may not be long enough to guarantee that the underlying filter has converged. Consequently, the single trajectory-based results cannot be directly applied to the multiple (short) trajectories case. Our goal in this paper is to estimate the system matrices (up to similarity transformations) using only multiple trajectories of transient responses of a partially-measured system that is entirely driven by noise.

%Our goal in this paper is to estimate $(A,C)$ matrices (up to similarity transformations) using only multiple trajectories of transient responses of a partially-observed system that is entirely driven by noise. As a byproduct, we will also simultaneously learn the steady state Kalman filter gain $K$ of the corresponding system, when the initial state of each trajectory has non-zero mean, but the system is marginally stable or strictly stable. 

\subsection{Contributions} Our work is inspired by recent work on stochastic system identification (with a single long trajectory) \cite{tsiamis2019finite}, and system identification with multiple trajectories (but with controlled inputs) \cite{zheng2020non}, and extends them in the following ways.

\begin{itemize}
  \item We provide results on the sample complexity of learning the dynamics of an autonomous stochastic linear system using multiple trajectories, without assuming that the associated Kalman filter has reached steady state (i.e., the initial states can have arbitrary covariance matrix). Compared to \cite{zheng2020non} and \cite{tsiamis2019finite}, our results neither rely on controlled inputs, nor rely on observations of steady state behaviors of the system.
  \item We provide the asymptotic learning rate of the multiple-trajectories-based stochastic system identification algorithm. If $N$ is the number of trajectories, we prove a learning rate of $\mathcal{O}(\frac{1}{\sqrt{N}})$ when the initial state in each trajectory has zero mean (which is a common assumption in the existing literature). This rate is consistent with \cite{zheng2020non} and \cite{tsiamis2019finite} (up to logarithmic factors). We further generalize the result to the case when the initial state in each trajectory has non-zero mean. In such case, we show that we can adjust the length of the regressor to achieve a learning rate of $\mathcal{O}(\sqrt{\frac{\log{N}}{N})}$ for strictly stable systems and a learning rate of $\mathcal{O}(\frac{(\log{N})^d}{\sqrt{N}})$ for marginally stable systems, where $d$ is some constant.
\end{itemize}

\section{Mathematical notation and terminology} \label{sec: notation and terminology}
Let $\mathbb{R}$ and $\mathbb{N}$  denote the sets of real numbers and natural numbers, respectively. Let $\sigma_n(\cdot)$ and $\sigma_{min}(\cdot)$ be the $n$-th largest and smallest singular value, respectively, of a symmetric matrix. Similarly, let $\lambda_{min}(\cdot)$ be the smallest eigenvalue of a symmetric matrix. We use $*$ to denote the conjugate transpose of a given matrix. The spectral radius of a given matrix is denoted as $\rho(\cdot)$. A square matrix $A$ is called strictly stable if $\rho(A)<1$, marginally stable if $\rho(A)\leq1$, and unstable if $\rho(A)>1$. We use $\|\cdot\|$ to denote the spectral norm of a given matrix. Vectors are treated as column vectors. A Gaussian distributed random vector is denoted as $u\sim \mathcal{N}(\mu,\Sigma)$, where $\mu$ is the mean and $\Sigma$ is the covariance matrix. We use $I$ to denote the identity matrix. We use $\mathbb{E}$ to denote the expectation. The symbol $\prod_{t=i}^{j}A_{t}$ is used to denote the matrix product, $A_{i}A_{i+1}\cdots A_{j}$. The symbol $\dagger$ is used to denote the pseudoinverse.

% === III. PROBLEM FORMULATION ====================================================
% =================================================================================
\section{Problem formulation} \label{sec: problem formulation}

Consider a discrete time linear time-invariant system with no user specified inputs:
\begin{equation}
    \begin{aligned} 
   x_{k+1}&=Ax_{k}+w_{k},&
   y_{k}&=Cx_{k}+v_{k}  \label{originalStateSpace},
\end{aligned}
\end{equation}
where $x_{k}\in \mathbb{R}^{n}$, $y_{k}\in \mathbb{R}^{m}$, $w_{k}\in \mathbb{R}^{n}$, $v_{k}\in \mathbb{R}^{m}$, $A\in \mathbb{R}^{n \times n}$ and $C\in \mathbb{R}^{m\times n}$. The noise terms $w_{k}$ and $v_{k}$ are assumed to be i.i.d Gaussian, i.e., $w_{k} \sim \mathcal{N}(0,Q)$, $v_{k} \sim \mathcal{N}(0,R)$. The initial state is also assumed to be independent of $w_{k}$ and $v_{k}$, and is distributed as $x_{0} \sim \mathcal{N}(\mu,\Sigma_{0})$. In addition, whether $\mu$ is zero or non-zero is assumed to be known. If $\mu$ is non-zero, the system matrix $A$ is assumed to be strictly stable or marginally stable. The system order $n$ is also assumed to be known. We refer to the above system as an autonomous stochastic linear system. We will make the following assumption. %In addition,  $w_{k}^{i},v_{k}^{i}$ and $x_{0}^{i}$ are independent over different runs. We make the following assumption throughout the paper.
\begin{assumption} \label{asm: standard}
The output covariance matrix $R$ is positive definite. The pair $(A,C)$ is observable and $(A,Q^{\frac{1}{2}})$ is controllable.
\end{assumption}

Under the above assumption, the \textit{Kalman Filter} associated with system \eqref{originalStateSpace} is a system of the form
\begin{equation}
    \begin{aligned} 
   \hat{x}_{k+1}&=A\hat{x}_{k}+K_{k}e_{k},&
   y_{k}&=C\hat{x}_{k}+e_{k} \label{KalmanFilterSystem},
\end{aligned}
\end{equation}
where $\hat{x}_{k}$ is an estimate of state $x_{k}$, with the initial estimate being the mean of the initial state in system \eqref{originalStateSpace}, i.e., $\hat{x}_{0}=\mu$. The sequence of matrices $K_{k}\in \mathbb{R}^{n\times m}$ is called the Kalman gain, given by
\begin{equation} \label{gain}
  K_{k}=AP_{k}C^{*}(CP_{k}C^{*}+R)^{-1},
\end{equation}
where the estimation error covariance $P_{k}\in \mathbb{R}^{n\times n}$ is updated based on the Riccati equation
\begin{equation*} 
  P_{k+1}=AP_{k}A^{*}+Q-AP_{k}C^{*}(CP_{k}C^{*}+R)^{-1}CP_{k}A^{*},
\end{equation*}
with $P_{0}=\Sigma_{0}$. Finally, $e_{k}=y_{k}-C\hat{x}_{k}$ are independent zero mean Gaussian innovations with covariance matrix given by 
\begin{align}  \label{innovation covariance}
  \Bar{R}_{k}=CP_{k}C^{*}+R.
\end{align}

Since the outputs of system \eqref{originalStateSpace} and system \eqref{KalmanFilterSystem} have identical statistical properties \cite{van2012subspace}, we will perform analysis on system \eqref{KalmanFilterSystem}. The subspace identification problem for stochastic systems that we tackle in this paper is to identify the system matrices $(A,C)$ up to a similarity transformation, given {\it multiple trajectories} of outputs of the system \eqref{originalStateSpace}. As a byproduct, we will also simultaneously learn the Kalman filter gain $K_{k}$ of the corresponding system, at some time step $k$. In particular, we are interested in the quality of the estimates of $(A,C)$ given a finite number of samples. %As a byproduct, we will also simultaneously learn the steady state Kalman filter gain $K$ of the corresponding system, when the initial state of each trajectory has non-zero mean, but the system is marginally stable or strictly stable.
% === IV. ALGORITHM ===============================================================
% =================================================================================
\section{Subspace identification technique} \label{sec: algorithm}
Here we describe a variant of classical subspace identification algorithm \cite{van2012subspace} to estimate $(A,C)$ matrices (up to a similarity transformation). We will first establish some definitions.

Suppose that we have access to $N$ independent output trajectories of system \eqref{originalStateSpace}, each of some length $T\in \mathbb{N}$, and each obtained right after restarting the system from an initial state $x_{0}\sim \mathcal{N}(\mu,\Sigma_{0})$. We denote the data from these trajectories as $\{y^{i}_{k}:1 \leq i \leq N,0 \leq k \leq T-1\}$, where the superscript denotes the trajectory index and the subscript denotes the time index. Let $p+f=T$, where $p,f$ are design parameters that satisfy $p,f>n$, where $n$ is the order of the system. We split the output samples from each output trajectory $i$ into past and future outputs with respect to $p$, and denote the past output and future output vectors for trajectory $i$ as:  
\begin{equation} \label{past output and furure output}
\begin{aligned} 
&Y_{-}^{i}\triangleq 
\begin{bmatrix}
y_{0}^{i*}&y_{1}^{i*}&\cdots&y_{p-1}^{i*}
\end{bmatrix}^{*},\\
&Y_{+}^{i}\triangleq 
\begin{bmatrix}
y_{p}^{i*}&y_{p+1}^{i*}&\cdots&y_{p+f-1}^{i*}
\end{bmatrix}^{*},
\end{aligned}
\end{equation}
respectively. 
The batch past output and batch future output matrices are formed by stacking all $N$ output trajectories:
\begin{equation}\label{batch output} 
\begin{aligned}
Y_{-}\triangleq 
\begin{bmatrix}
Y_{-}^{1}&Y_{-}^{2}&\cdots&Y_{-}^{N}\\
\end{bmatrix},     
&& 
Y_{+}\triangleq 
\begin{bmatrix}
Y_{+}^{1}&Y_{+}^{2}&\cdots&Y_{+}^{N}\\
\end{bmatrix}.
\end{aligned}
\end{equation}
\noindent The past and future innovations $E_{-}^{i},E_{+}^{i},E_{-},E_{+}$ are defined similarly, using the signals $e_{k}^{i}$ rather than $y_{k}^{i}$.

Let the batch matrix of initial states be $
\hat{X}_{0}\triangleq 
\begin{bmatrix}
\hat{x}_0^{1}&\hat{x}_0^{2}&\cdots&\hat{x}_0^{N}\\
\end{bmatrix}.
$ Define the largest norm of innovation covariance matrices as
$\mathcal{\bar{R}}_{T}\triangleq \max_{t\in0,\dots,T-1}\|\bar{R}_{t}\|,$
where $\bar{R}_{t}$ is defined in \eqref{innovation covariance}. For any $l\geq1$, the extended observability matrix $\mathcal{O}_{l}\in\mathbb{R}^{ml\times n}$ and the reversed extended controllability matrix $\mathcal{K}_{p}\in\mathbb{R}^{n\times mp}$ are defined as:
\begin{equation*}
\begin{aligned} 
&\mathcal{O}_{l}\triangleq 
\begin{bmatrix}
C^{*} & (CA)^{*} & \cdots& (CA^{l-1})^{*}
\end{bmatrix}^{*}, \\
&\mathcal{K}_{p}\triangleq 
\begin{bmatrix}
((A-K_{p-1}C)\cdots(A-K_{1}C)K_{0})^{*}\\
\vdots\\
((A-K_{p-1}C)K_{p-2})^{*}\\
K_{p-1}^{*}\\
\end{bmatrix}^{*}.   
\end{aligned}
\end{equation*}
Define
\begin{equation}
\begin{aligned}
G\triangleq \mathcal{O}_{f} \mathcal{K}_{p}.
\end{aligned} \label{True G}
\end{equation}

Let $K\in \mathbb{R}^{n\times m}$ be the steady state Kalman gain $K=APC^{*}(CPC^{*}+R)^{-1}$, where $P\in \mathbb{R}^{n\times n}$ is the solution to the Riccati equation, $P=APA^{*}+Q-APC^{*}(CPC^{*}+R)^{-1}CPA^{*}.$
From Kalman filtering theory, the matrix $A-KC$ has spectral radius strictly less than 1 \cite{anderson2012optimal}. Denote the reversed extended controllability matrix formed by the steady state Kalman gain $K$ as 
\begin{equation*}
\begin{aligned}
\mathbf{K}_{p}\triangleq 
\begin{bmatrix}
(A-KC)^{p-1}K&(A-KC)^{p-2}K&\cdots&K\\
\end{bmatrix}.     
\end{aligned}
\end{equation*}

We further make the following assumption.
\begin{assumption} \label{asm: controllable}
We have $\rank (\mathcal{K}_{p})=\rank(\mathbf{K}_{p})=n$.
\end{assumption}

The rank condition on $\mathbf{K}_{p}$ is standard, e.g., \cite{knudsen2001consistency, tsiamis2019finite}. The rank condition on $\mathcal{K}_{p}$ is needed to ensure that $G$ has rank $n$, which can be satisfied in practice by choosing $p$ to be relatively large if the rank condition on $\mathbf{K}_{p}$ is satisfied (see Proposition \ref{proposition:Lower bound of least singular value of G} in the Appendix).

Finally, for any integer $a\geq 0$ and $b\geq 2$, define the  block-Toeplitz matrix $\mathcal{T}_{b}^{a}\in\mathbb{R}^{mb\times mb}$ as:
\begin{align*}
\mathcal{T}_b^{a} \triangleq 
\begin{bmatrix}
I_{m}&0&\cdots&0\\
CK_{a}&I_{m}&\cdots&0\\
\vdots&\vdots&\indent&\vdots\\
CA^{b-2}K_{a}&CA^{b-3}K_{a+1}&\cdots&I_{m}\\
\end{bmatrix}.
\end{align*}

\subsection{Linear regression}
The subspace identification technique first uses linear regression to estimate $G$ from \eqref{True G}, which will subsequently form the basis for the recovery of the system parameters.

For any output trajectory $i\in\{1,\cdots,N\}$, by iterating \eqref{KalmanFilterSystem}, the future output matrix $Y_{+}^{i}$ satisfies
\begin{equation}
Y_{+}^{i}=\mathcal{O}_{f}\hat{x}_{p}^{i}+\mathcal{T}_{f}^{p}E_{+}^{i}. \label{YfutureExpression}
\end{equation}
Note that at any time step $k$, the state $\hat{x}_{k}^{i}$ can be expressed from \eqref{KalmanFilterSystem} as
\begin{equation*}
\begin{aligned}
\hat{x}_{k}^{i}&=A\hat{x}^{i}_{k-1}+K_{k-1}(y^{i}_{k-1}-C\hat{x}^{i}_{k-1})\\
&=K_{k-1}y^{i}_{k-1}+(A-K_{k-1}C)\hat{x}^{i}_{k-1}.
\end{aligned}
\end{equation*}

By expanding the above relationship recursively, we have
\begin{equation*}
\begin{aligned}
\hat{x}_{p}^{i}&=K_{p-1}y_{p-1}^{i}+\cdots+
(A-K_{p-1}C)\cdots(A-K_{1}C)K_{0}y_{0}^{i}\\
&+(A-K_{p-1}C)\cdots(A-K_{0}C)\hat{x}_{0}^{i}\\
&=\mathcal{K}_{p}Y_{-}^{i}+(A-K_{p-1}C)\cdots(A-K_{0}C)\hat{x}_{0}^{i}. \label{TVX}
\end{aligned}
\end{equation*}

By substituting the above equality into \eqref{YfutureExpression}, the relationship between the batch output matrices is given by
\begin{equation*} 
Y_{+}=GY_{-}+\mathcal{O}_{f}(A-K_{p-1}C)\cdots(A-K_{0}C)\hat{X}_{0}+\mathcal{T}_{f}^{p}E_{+}.
\end{equation*}
An estimate of $G$ (motivated by the least squares approach) is  
\begin{equation}
\begin{aligned}
\hat{G}=Y_{+}Y_{-}^{*}(Y_{-}Y_{-}^{*})^{-1}.\label{Ghat}
\end{aligned}
\end{equation}
Consequently, the estimation error for matrix $G$ can be expressed as
\begin{equation}
\begin{aligned}
\hat{G}-G=&\mathcal{T}_{f}^{p}E_{+}Y_{-}^{*}(Y_{-}Y_{-}^{*})^{-1}+\\
&\mathcal{O}_{f}(A-K_{p-1}C)\cdots(A-K_{0}C)\hat{X}_{0}Y_{-}^{*}(Y_{-}Y_{-}^{*})^{-1}, \label{error_G}
\end{aligned}
\end{equation}
where the second term can be dropped if $\|\hat{X}_{0}\|=0$, i.e., the initial state of system \eqref{originalStateSpace} has zero mean. When $\|\hat{X}_{0}\|$ is known to be non-zero but the system is marginally stable ($\rho(A)\leq 1$), we can leverage the fact that the norm $\|(A-K_{p-1}C)\cdots(A-K_{0}C)\|$ converges to zero exponentially fast with $p$ (see Proposition \ref{proposition:Convergence of (A-KC)} in the Appendix) by setting $p=c\log{N}$ for some positive constant $c$, to make the second term go to zero asymptotically. The above steps are encapsulated in Algorithm \ref{alg: Rgression Algorithm}.

\begin{algorithm}[H] 
\caption{Linear regression to calculate an estimate $\hat{G}$ of $G$ } \label{Linear Regression} \label{alg: Rgression Algorithm}
\textbf{Input} $N$ output trajectories $\{y^{i}_{k}:1 \leq i \leq N,0 \leq k \leq T-1\}$, parameters $p,f$
\begin{algorithmic}[1]
\State For each output trajectory $i\in\{1,\cdots,N\}$, construct the past output and future output $Y_{-}^{i}, Y_{+}^{i}$ as in \eqref{past output and furure output}.
\State Construct the batch past output and batch future output $Y_{-}, Y_{+}$ as in \eqref{batch output}.
\State Return $\hat{G}=Y_{+}Y_{-}^{*}(Y_{-}Y_{-}^{*})^{-1}$.
\end{algorithmic}
\end{algorithm}
\begin{Remark}Note that the matrix $G$ itself can be used to predict future outputs from past outputs. The value $GY_{-}^{i}$ represents the Kalman prediction of the next $f$ outputs using the past $p$ measurements, assuming the initial state has zero mean. Its role is similar to the Markov parameters that map inputs to outputs in the case when one has measured inputs.
\end{Remark}

%\subsection{Recursive Update}\label{Recursive}
%The estimate $\hat{G}$ can be updated recursively, thus allows for efficient use of memory. One way to do so is to apply the Binomial inverse theorem from \cite{henderson1981deriving}. Let $\hat{G}(t)$ denote the $\hat{G}$ computed using $t$ trajectories, $t=1,\cdots,N$. When an output trajectory $t$ arrives, we construct the output blocks $Y_{-}^{t}$, $Y_{+}^{t}$ as in \eqref{Pastblock}, \eqref{Futureblock}. We have $\hat{G}(t)=(\sum_{i=1}^{t-1}Y^{i}_{+}Y_{-}^{i*}+Y^{t}_{+}Y_{-}^{t*})(\sum_{i=1}^{t-1}Y^{i}_{-}Y_{-}^{i*}+Y_{-}^{t}Y_{-}^{t*})^{-1}$. We can apply the Binomial inverse theorem, which states
%\begin{equation}
%\begin{aligned}
%(A+B)^{-1}=A^{-1}-A^{-1}(I+BA^{-1})^{-1}BA^{-1}
%\end{aligned}
%\end{equation}
%for an invertible $A$. Hence, the update of $\hat{G}$ can be %written as
%\begin{equation}
%\begin{aligned}
%&\alpha(t)=\alpha(t-1)+Y_{+}^{t}Y_{-}^{t*}\\
%&\beta(t)=\beta(t-1)-\\
%&\indent\indent\indent\beta(t-1)(I+Y_{-}^{t}Y_{-}^{t*}\beta(t-%1))^{-1}Y_{-}^{t}Y_{-}^{t*}\beta(t-1)\\
%&\hat{G}(t)=\alpha(t)\beta(t),
%\end{aligned}
%\end{equation}
%for $t\geq t_{0}$, where $\alpha(t_{0})=\sum_{i=1}^{t_{0}}Y^{i}_{+}Y_{-}^{i*}$ and $\beta(t_{0})=(\sum_{i=1}^{t_{0}}Y^{i}_{-}Y_{-}^{i*})^{-1}$. Note that $t_{0}$ has to be greater than the minimum number of trajectories such that $\beta(t_0)$ is invertible.

\subsection{Balanced realization} 
The following balanced realization algorithm uses a standard Singular Value Decomposition to extract the estimated system matrices $(\hat{A},\hat{C},\hat{K}_{p-1})$ from the estimate $\hat{G}$. 

\begin{algorithm}[H]
\caption{Balanced realization to calculate estimates $(\hat{A},\hat{C},\hat{K}_{p-1})$ of $(A,C,K_{p-1})$ up to a similarity transformation}\label{alg: SVD Algorithm}
\textbf{Input} The estimate $\hat{G}$, parameters $n,m,f$
\begin{algorithmic}[1]
\State Perform the Singular Value Decomposition: 
$\hat{G}=
\begin{bmatrix}
\hat{U}_{1}& \hat{U}_{2}
\end{bmatrix}
\begin{bmatrix}
\hat{\Sigma}_{1}&0\\
0&\hat{\Sigma}_{2}\\
\end{bmatrix}
\begin{bmatrix}
\hat{V}_{1}^{*}\\
\hat{V}_{2}^{*}\\
\end{bmatrix},$ where $\Sigma_{1}\in \mathbb{R}^{n \times n}$ contains the $n$-largest singular values of $\hat{G}$.
\State Compute the estimated observability matrix $\hat{\mathcal{O}}_{f}=\hat{U}_{1}\hat{\Sigma}_{1}^{\frac{1}{2}}$, and let the top $m$ rows of $\hat{\mathcal{O}}_{f}$ be $\hat{C}$.
\State Compute the estimated reversed extended controllability matrix $\hat{\mathcal{K}}_{p}=\hat{\Sigma}_{1}^{\frac{1}{2}}\hat{V}_{1}^{*}$, and let the last $m$ columns of $\hat{\mathcal{K}}_{p}$ be $\hat{K}_{p-1}$.
\State Compute $\hat{A}=(\hat{\mathcal{O}}_{f}^{u})^{\dagger}\hat{\mathcal{O}}_{f}^{l}$, where $\hat{\mathcal{O}}_{f}^{u}$ is the submatrix formed by the top $m(f-1)$ rows of $\hat{\mathcal{O}}_{f}$, and $\hat{\mathcal{O}}_{f}^{l}$ is the submatrix formed by dropping the first $m$ rows of $\hat{\mathcal{O}}_{f}$. \State Return $(\hat{A},\hat{C},\hat{K}_{p-1})$.
\end{algorithmic}
\end{algorithm}

\section{Main results}
In this section, we will present our main results on bounding the estimation error $\|\hat{G}-G\|$ from \eqref{error_G}.  We will show that the term $\|(Y_{-}Y_{-}^{*})^{-1}\|$ decreases with a rate of $\mathcal{O}(\frac{1}{N})$, and then upper bound the growth rate of other terms in \eqref{error_G} separately. Using recent results on the balanced realization algorithm with the adjustments to accommodate the non-steady state Kalman filter, we then show that the estimation error of the system matrices $A,C,K_{p-1}$ will also be bounded when the error $\|\hat{G}-G\|$ is small enough. %We first show the term $\|(Y_{-}Y_{-}^{*})^{-1}\|$ decreases with a rate of $\mathcal{O}(\frac{1}{N})$, and then upper bound the growth rate of other terms in \eqref{error_G} separately.  

First, denote the covariance matrix of the weighted past innovation $\mathcal{T}_{p}^{0}E_{-}^{i}$, $1\leq i\leq N$, as:
\begin{equation*}
    \begin{aligned} 
   \Sigma_{E}&\triangleq\mathbb{E}[\mathcal{T}_{p}^{0}E_{-}^{i}E_{-}^{i*}\mathcal{T}_{p}^{0*}]=\mathcal{T}_{p}^{0} \diag(\bar{R}_{0},\cdots,\bar{R}_{p-1})\mathcal{T}_{p}^{0*}.
\end{aligned}
\end{equation*}
Let $\sigma_{E}\triangleq\sigma_{min}(\Sigma_{E})$. We first show that the weighted innovation covariance matrix $\Sigma_{E}$ is strictly positive definite. The proof is similar to \cite[Lemma~2]{tsiamis2019finite}.

\begin{proposition} 
Let $\sigma_{E}\triangleq\sigma_{min}(\Sigma_{E})$. We have $\sigma_{E}\geq\sigma_{min}(R)>0$.
\end{proposition}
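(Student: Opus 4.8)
The plan is to identify $\Sigma_E$ with the covariance matrix of a length-$p$ block of outputs and then read off the bound from the measurement equation of \eqref{originalStateSpace}, rather than from the congruence $\Sigma_E=\mathcal{T}_p^0\diag(\bar{R}_0,\dots,\bar{R}_{p-1})\mathcal{T}_p^{0*}$ directly.

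First, since the innovations $e_k^i$ are zero mean, $\mathcal{T}_p^0 E_-^i$ is zero mean, so $\Sigma_E=\operatorname{Cov}(\mathcal{T}_p^0 E_-^i)$. Iterating \eqref{KalmanFilterSystem} over the first $p$ time steps, exactly as \eqref{YfutureExpression} was derived for the future block, gives $Y_-^i=\mathcal{O}_p\hat{x}_0^i+\mathcal{T}_p^0 E_-^i$; since $\hat{x}_0^i=\mu$ is deterministic, this shows $\Sigma_E=\operatorname{Cov}(Y_-^i)$. Because the outputs of \eqref{originalStateSpace} and \eqref{KalmanFilterSystem} have identical second-order statistics, I may instead compute $\operatorname{Cov}(Y_-^i)$ from \eqref{originalStateSpace}. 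Stacking $y_k^i=Cx_k^i+v_k^i$ for $k=0,\dots,p-1$ yields $Y_-^i=(I_p\otimes C)X_-^i+V_-^i$, where $X_-^i\triangleq[x_0^{i*}\,\cdots\,x_{p-1}^{i*}]^*$ and $V_-^i\triangleq[v_0^{i*}\,\cdots\,v_{p-1}^{i*}]^*$. Each $x_k^i$ is a function of $x_0^i,w_0^i,\dots,w_{k-1}^i$ only, all independent of every $v_\ell^i$, so $X_-^i$ and $V_-^i$ are uncorrelated and hence $\operatorname{Cov}(Y_-^i)=(I_p\otimes C)\operatorname{Cov}(X_-^i)(I_p\otimes C)^*+(I_p\otimes R)$. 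The first term is positive semidefinite and $I_p\otimes R\succeq\sigma_{min}(R)I_{mp}$, so $\Sigma_E\succeq\sigma_{min}(R)I_{mp}$, which gives $\sigma_E=\sigma_{min}(\Sigma_E)\ge\sigma_{min}(R)$; this is strictly positive because $R$ is positive definite under Assumption \ref{asm: standard}.

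The step that carries the real content is the identity $\Sigma_E=\operatorname{Cov}(Y_-^i)$ together with the availability of this ``physical'' decomposition of the covariance, in which the measurement noise covariance enters additively. The tempting shortcut — bound each block $\bar{R}_t=CP_tC^*+R\succeq R$ and propagate this through the congruence $\Sigma_E=\mathcal{T}_p^0\diag(\bar{R}_0,\dots,\bar{R}_{p-1})\mathcal{T}_p^{0*}$ — does not work, because $\mathcal{T}_p^0$, while unit lower triangular, can have singular values well below $1$; congruence alone only yields $\sigma_E\ge\sigma_{min}(R)\,\sigma_{min}(\mathcal{T}_p^0)^2$, which is generally too weak. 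It is precisely the Riccati recursion behind the definition of the $\bar{R}_t$ that forces $\mathcal{T}_p^0\diag(\bar{R}_0,\dots,\bar{R}_{p-1})\mathcal{T}_p^{0*}$ to coincide with $(I_p\otimes C)\operatorname{Cov}(X_-^i)(I_p\otimes C)^*+(I_p\otimes R)$, and routing through this second form is what the argument (as in \cite[Lemma~2]{tsiamis2019finite}) relies on.
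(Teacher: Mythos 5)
Your proof is correct and takes essentially the same route as the paper's: both rewrite the weighted innovation $\mathcal{T}_p^0E_-^i$ as the centered past output $Y_-^i-\mathcal{O}_p\hat{x}_0^i$, decompose that output through the original system so that the measurement noise $V_-^i$ enters additively and independently of the state-dependent part, and conclude $\Sigma_E\succeq\diag(R,\dots,R)$, hence $\sigma_E\geq\sigma_{min}(R)>0$. The only difference is cosmetic grouping: the paper splits the state contribution into $\mathcal{O}_p(x_0^i-\hat{x}_0^i)+\mathbf{T}W_-^i$, while you keep it as a single stacked-state covariance term, which changes nothing in the argument.
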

\begin{proof}
For any output trajectory $i$, its corresponding weighted past innovation $\mathcal{T}_{p}^{0}E_{-}^{i}$ can be written as
\begin{equation*}
\begin{aligned}
\mathcal{T}_{p}^{0}E_{-}^{i}=Y_{-}^{i}-\mathcal{O}_{p}\hat{x}_0^{i}=\mathcal{O}_{p}(x_{0}^{i}-\hat{x}_0^{i})+\mathbf{T}W_{-}^{i}+V_{-}^{i},
\end{aligned}
\end{equation*}
where $W_{-}^{i}$ and $V_{-}^{i}$ are the process and output noises respectively in system \eqref{originalStateSpace}, and are defined similarly as $Y_{-}^{i}$. Matrix $\mathbf{T}$ is a block-Toeplitz matrix which accounts for the weight of the process noise in system \eqref{originalStateSpace}, and its explicit form is omitted in the interest of space. Since $x_0^{i}-\hat{x}_{0}^i$, $V_{-}^{i}$ and $W_{-}^{i}$ are independent, we have
\begin{equation*}
\begin{aligned}
\Sigma_E&=\mathbb{E}[\mathcal{T}_{p}^{0}E_{-}^{i}E_{-}^{i*}\mathcal{T}_{p}^{0*}]\succeq \mathbb{E}[V_{-}^{i}V_{-}^{i*}]=\diag(R,\cdots,R).
\end{aligned}
\end{equation*}
Hence, we have $\sigma_E\geq\sigma_{min}(R)>0$, where the second inequality comes from Assumption \ref{asm: standard}.
\end{proof}

Next we will show that the term $\|(Y_{-}Y_{-}^{*})^{-1}\|$ is decreasing with a rate of $\mathcal{O}(\frac{1}{N})$. Since $Y_{-}=\mathcal{O}_{p}\hat{X}_{0}+\mathcal{T}_{p}^{0}E_{-}$, the explicit form of $Y_{-}Y_{-}^{*}$ is 
\begin{equation}
\begin{aligned}
Y_{-}Y_{-}^{*}&=\mathcal{O}_{p}\hat{X}_{0}\hat{X}_{0}^{*}\mathcal{O}_{p}^{*}+\mathcal{T}_{p}^{0}E_{-}E_{-}^{*}\mathcal{T}_{p}^{0*}\\
&+\mathcal{O}_{p}\hat{X}_{0}E_{-}^{*}\mathcal{T}_{p}^{0*}+\mathcal{T}_{p}^{0}E_{-}\hat{X}_{0}^{*}\mathcal{O}_{p}^{*}, \label{YY^{*}}
\end{aligned}
\end{equation}
and we will bound these terms separately.

We will rely on the following lemma from \cite[Lemma~2]{dean2019sample}, which provides a non-asymptotic lower bound of a standard Wishart matrix.
\begin{lemma} Let $u_{i}\sim \mathcal{N}(0,I_{mp})$, $i=1,\ldots,N$ be i.i.d random vectors. For any fixed $\delta > 0$, we have
\begin{equation*}
   \sqrt{\lambda_{min}(\sum_{i=1}^{N}u_{i}u_{i}^{*}})\geq \sqrt{N}-\sqrt{mp}-\sqrt{2\log{\frac{1}{\delta}}}
\end{equation*}
\label{lemma:Lower bound of unit variance gaussian}
with probability at least $1-\delta$.
\end{lemma}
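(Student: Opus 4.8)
Stack the vectors into the matrix $U \triangleq \begin{bmatrix} u_1 & \cdots & u_N \end{bmatrix} \in \mathbb{R}^{mp \times N}$, whose entries are i.i.d.\ $\mathcal{N}(0,1)$. Then $\sum_{i=1}^{N} u_i u_i^* = UU^*$, so $\sqrt{\lambda_{min}(UU^*)}$ equals the smallest singular value $\sigma_{min}(U)$ whenever $N \geq mp$. If instead $N < mp$, then $UU^*$ is singular, $\lambda_{min}(UU^*) = 0$, and the claimed lower bound $\sqrt{N} - \sqrt{mp} - \sqrt{2\log(1/\delta)}$ is non-positive, so there is nothing to prove; hence we may assume $N \geq mp$, and the task reduces to lower bounding $\sigma_{min}(U)$.

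The plan uses two standard ingredients: a lower bound on $\mathbb{E}[\sigma_{min}(U)]$ and Gaussian concentration of $\sigma_{min}(U)$ about its mean. For the first, I would invoke the classical estimate for the extreme singular values of a tall Gaussian matrix, namely $\mathbb{E}[\sigma_{min}(U)] \geq \sqrt{N} - \sqrt{mp}$; this follows from Gordon's Gaussian min-max comparison inequality applied to the process $(a,b) \mapsto a^* U b$ indexed over the product of the unit spheres in $\mathbb{R}^{mp}$ and $\mathbb{R}^{N}$ (equivalently, one may cite the Davidson--Szarek bounds). For the second, observe that $A \mapsto \sigma_{min}(A)$ is $1$-Lipschitz with respect to the Frobenius norm, since $|\sigma_{min}(A) - \sigma_{min}(B)| \leq \|A - B\| \leq \|A - B\|_{F}$ (the spectral norm being dominated by the Frobenius norm); viewing $U$ as a standard Gaussian vector in $\mathbb{R}^{mpN}$ and applying the Gaussian concentration inequality for Lipschitz functions gives $\mathbb{P}\big(\sigma_{min}(U) \leq \mathbb{E}[\sigma_{min}(U)] - t\big) \leq e^{-t^{2}/2}$ for every $t \geq 0$.

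Combining the two ingredients yields $\mathbb{P}\big(\sigma_{min}(U) \leq \sqrt{N} - \sqrt{mp} - t\big) \leq e^{-t^{2}/2}$; choosing $t = \sqrt{2\log(1/\delta)}$ makes the right-hand side equal to $\delta$ and gives the stated bound. I expect the only substantive step to be the mean estimate $\mathbb{E}[\sigma_{min}(U)] \geq \sqrt{N} - \sqrt{mp}$ (the Gordon comparison computation); the reduction to $\sigma_{min}(U)$, the Lipschitz property, and the optimization over $t$ are routine.
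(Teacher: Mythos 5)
Your argument is correct: the reduction to $\sigma_{min}(U)$ (with the trivial case $N < mp$ dispatched), the Gordon/Davidson--Szarek mean bound $\mathbb{E}[\sigma_{min}(U)] \geq \sqrt{N}-\sqrt{mp}$, and Gaussian concentration for the $1$-Lipschitz map $A \mapsto \sigma_{min}(A)$ combine exactly as you say to give the stated tail bound with $t=\sqrt{2\log(1/\delta)}$. The paper itself offers no proof -- it imports the lemma verbatim from \cite[Lemma~2]{dean2019sample} -- and your derivation is precisely the standard argument underlying that cited result, so there is nothing further to reconcile.
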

\begin{proposition}
For any fixed $\delta >0$, let $N\geq N_{0}\triangleq  8mp+16\log{\frac{1}{\delta}}$. We have
\begin{equation*}
   \mathcal{T}_{p}^{0}E_{-}E_{-}^{*}\mathcal{T}_{p}^{0*}\succeq \frac{N}{4}\sigma_{E}I_{mp}
   \end{equation*}
\label{proposition:lower bound noise}
 with probability at least $1-\delta$.
\end{proposition}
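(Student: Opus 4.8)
The plan is to recognize $\mathcal{T}_{p}^{0}E_{-}E_{-}^{*}\mathcal{T}_{p}^{0*}$ as a whitened Wishart matrix and then invoke Lemma~\ref{lemma:Lower bound of unit variance gaussian}. First I would write
\begin{equation*}
\mathcal{T}_{p}^{0}E_{-}E_{-}^{*}\mathcal{T}_{p}^{0*}=\sum_{i=1}^{N}(\mathcal{T}_{p}^{0}E_{-}^{i})(\mathcal{T}_{p}^{0}E_{-}^{i})^{*},
\end{equation*}
and observe that, because the $N$ trajectories are independent and each innovation sequence $e_{0}^{i},\dots,e_{p-1}^{i}$ consists of independent zero-mean Gaussian vectors, the vectors $\mathcal{T}_{p}^{0}E_{-}^{i}$, $i=1,\dots,N$, are i.i.d.\ $\mathcal{N}(0,\Sigma_{E})$. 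Since $\Sigma_{E}\succeq\sigma_{E}I_{mp}\succ0$ by the preceding proposition, $\Sigma_{E}^{1/2}$ is invertible and we may write $\mathcal{T}_{p}^{0}E_{-}^{i}=\Sigma_{E}^{1/2}u_{i}$ with $u_{i}\sim\mathcal{N}(0,I_{mp})$ i.i.d., so that $\mathcal{T}_{p}^{0}E_{-}E_{-}^{*}\mathcal{T}_{p}^{0*}=\Sigma_{E}^{1/2}\big(\sum_{i=1}^{N}u_{i}u_{i}^{*}\big)\Sigma_{E}^{1/2}$.

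Next I would use the elementary fact that for any positive semidefinite $M$ and positive definite $\Sigma$ one has $\Sigma^{1/2}M\Sigma^{1/2}\succeq\lambda_{min}(M)\,\Sigma\succeq\lambda_{min}(M)\,\sigma_{min}(\Sigma)\,I$. Applying this with $M=\sum_{i=1}^{N}u_{i}u_{i}^{*}$ and $\Sigma=\Sigma_{E}$ gives
\begin{equation*}
\mathcal{T}_{p}^{0}E_{-}E_{-}^{*}\mathcal{T}_{p}^{0*}\succeq\sigma_{E}\,\lambda_{min}\Big(\sum_{i=1}^{N}u_{i}u_{i}^{*}\Big)\,I_{mp},
\end{equation*}
so it suffices to show that $\lambda_{min}\big(\sum_{i=1}^{N}u_{i}u_{i}^{*}\big)\geq N/4$ with probability at least $1-\delta$.

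Finally, Lemma~\ref{lemma:Lower bound of unit variance gaussian} yields, with probability at least $1-\delta$, the bound $\sqrt{\lambda_{min}(\sum_{i=1}^{N}u_{i}u_{i}^{*})}\geq\sqrt{N}-\sqrt{mp}-\sqrt{2\log\frac{1}{\delta}}$, so it remains to check that $N\geq N_{0}=8mp+16\log\frac{1}{\delta}$ forces $\sqrt{N}-\sqrt{mp}-\sqrt{2\log\frac{1}{\delta}}\geq\tfrac12\sqrt{N}$, i.e.\ $\sqrt{mp}+\sqrt{2\log\frac{1}{\delta}}\leq\tfrac12\sqrt{N}$. Squaring and using $2\sqrt{ab}\leq a+b$ gives $(\sqrt{mp}+\sqrt{2\log\frac{1}{\delta}})^{2}\leq 2mp+4\log\frac{1}{\delta}\leq\tfrac14 N$, which holds precisely at the stated threshold; hence $\lambda_{min}(\sum_{i=1}^{N}u_{i}u_{i}^{*})\geq N/4$ and the claim follows. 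The argument is largely routine: the only steps needing care are justifying the i.i.d.\ Gaussian structure of $\{\mathcal{T}_{p}^{0}E_{-}^{i}\}$ (using independence both across and within trajectories) and the constant bookkeeping in this last step — choosing how to split the cross term via $2\sqrt{ab}\le a+b$ so that the resulting sample requirement matches $N_{0}$ exactly.
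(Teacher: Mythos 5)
Your proof is correct and follows essentially the same route as the paper: reduce to the standard Wishart bound of Lemma~\ref{lemma:Lower bound of unit variance gaussian} via a whitening of the innovations, verify that $N\geq N_{0}$ forces $\sqrt{mp}+\sqrt{2\log\frac{1}{\delta}}\leq\frac{1}{2}\sqrt{N}$, and conjugate back to obtain the $\frac{N}{4}\sigma_{E}I_{mp}$ lower bound. The only (immaterial) difference is that you whiten with $\Sigma_{E}^{1/2}$ directly, whereas the paper factors $E_{-}^{i}=\diag(\bar{R}_{0}^{\frac{1}{2}},\cdots,\bar{R}_{p-1}^{\frac{1}{2}})\mathbf{u}_{i}$ and multiplies by $\mathcal{T}_{p}^{0}\diag(\bar{R}_{0}^{\frac{1}{2}},\cdots,\bar{R}_{p-1}^{\frac{1}{2}})$ at the end.
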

\begin{proof}
For any output trajectory $i$, note that the past innovation $E_{-}^{i}$ has the same distribution as a single Gaussian random vector, $E_{-}^{i}\sim \mathcal{N}(0,\diag(\bar{R}_{0},\cdots,\bar{R}_{p-1}))$, since the innovations $e_{0}^{i},\cdots,e_{p-1}^{i}$ are independent zero-mean Gaussian random vectors with covariance matrices $\Bar{R}_{0},\cdots,\Bar{R}_{p-1}$, respectively. We can rewrite $E_{-}^{i}$ as 
\begin{equation*}
\begin{aligned}
E_{-}^{i}=\diag(\bar{R}_{0}^{\frac{1}{2}},\cdots,\bar{R}_{p-1}^{\frac{1}{2}})\mathbf{u}_{i},
\end{aligned}
\end{equation*}
where $\mathbf{u}_{i}$ are i.i.d random vectors with $\mathbf{u}_{i}\sim \mathcal{N}(0,I_{mp})$. Let $U_{-}= 
\begin{bmatrix}
\mathbf{u}_{1}&\mathbf{u}_{2}&\cdots&\mathbf{u}_{N}\\
\end{bmatrix}$.
Fixing $\delta>0$ and applying Lemma \ref{lemma:Lower bound of unit variance gaussian}, with probability of at least $1-\delta$, we have 
\begin{equation} \label{lam_min}
\begin{aligned}
   \sqrt{\lambda_{min}(U_{-}U_{-}^{*})}\geq \sqrt{N}-\sqrt{mp}-\sqrt{2\log{\frac{1}{\delta}}}.
   \end{aligned}
\end{equation}

Considering the inequality $2(a^2+b^2)\geq(a+b)^{2}$ and the assumption that $N\geq N_{0}\triangleq  8mp+16\log{\frac{1}{\delta}}$, we have
\begin{equation*}
\begin{aligned}
 2(mp+2\log{\frac{1}{\delta}})\geq  (\sqrt{mp}+\sqrt{2\log{\frac{1}{\delta}}})^2,
   \end{aligned}
\end{equation*}
\begin{equation*}
\begin{aligned}
%\Rightarrow &N\geq 4(\sqrt{mp}+\sqrt{2\log{\frac{1}{\delta}}})^{2}\\
\Rightarrow &\frac{\sqrt{N}}{2}\geq \sqrt{mp}+\sqrt{2\log{\frac{1}{\delta}}}.
\end{aligned}
\end{equation*}

In conjunction with \eqref{lam_min}, this yields $\sqrt{\lambda_{min}(U_{-}U_{-}^{*})} \geq\frac{1}{2}\sqrt{N}$ with probability at least $1-\delta$. Hence, we have
\begin{equation*}
\begin{aligned}
   U_{-}U_{-}^{*}\succeq\frac{N}{4}I_{mp}
   \end{aligned}
\end{equation*}
with probability of at least $1-\delta$.

Finally, multiplying by $\mathcal{T}_{p}^{0} \diag(\bar{R}_{0}^{\frac{1}{2}},\cdots,\bar{R}_{p-1}^{\frac{1}{2}})$ from the left and $\diag(\bar{R}_{0}^{\frac{1}{2}},\cdots,\bar{R}_{p-1}^{\frac{1}{2}})\mathcal{T}_{p}^{0*}$ from the right gives $ \mathcal{T}_{p}^{0}E_{-}E_{-}^{*}\mathcal{T}_{p}^{0*}\succeq\frac{N}{4}\Sigma_{E}\succeq\frac{N}{4}\sigma_{E}I_{mp}$ with probability at least $1-\delta$.
\end{proof}

To bound the cross terms due to the possibly non-zero batch matrix of initial states in \eqref{YY^{*}}, $\hat{X}_{0}E_{-}^{*}$, we will be using the following lemma from \cite[Lemma~A.1]{oymak2018non}.
%\begin{lemma}
%Let $M\in\mathbb{R}^{m\times n}$ be a matrix with $m\geq n$ and $\|M\|\leq \eta$. Let $Z\in\mathbb{R}^{m\times k}$ be a matrix with independent standard normal entries. For any fixed $0<\delta<1$, we have 
%\begin{equation*}
%  \|M^{*}Z\|\leq \eta %(\sqrt{2(n+k)}+\sqrt{2\log{\frac{2}{\delta}}})
%\end{equation*}
%\label{lemma:cross-term}
%with probability at least $1-\delta$.
%\end{lemma}
\begin{lemma} \label{lemma:cross-term}
Let $M\in\mathbb{R}^{m\times n}$ be a matrix with $m\geq n$, and let $\eta\in\mathbb{R}$ be such that $\|M\|\leq \eta$. Let $Z\in\mathbb{R}^{m\times k}$ be a matrix with independent standard normal entries. Then, for all $t\geq0$, with probability at least $1-2\exp(\frac{-t^{2}}{2})$, 
\begin{equation*}
  \|M^{*}Z\|\leq \eta (\sqrt{2(n+k)}+t).
\end{equation*}
\end{lemma}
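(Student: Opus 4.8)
The plan is to prove the bound by combining Gaussian concentration of measure for Lipschitz functions with a bound on the \emph{expectation} of the operator norm of a Gaussian matrix. View the scalar $f(Z) \triangleq \|M^{*}Z\|$ as a function of the $mk$ i.i.d.\ standard normal entries of $Z$, i.e.\ as a function on $\mathbb{R}^{mk}$ equipped with the Euclidean (Frobenius) norm. I would then show that $f$ concentrates tightly around $\mathbb{E} f(Z)$, and separately that $\mathbb{E} f(Z) \leq \eta\sqrt{2(n+k)}$.

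First I would verify that $f$ is $\eta$-Lipschitz: for any $Z, Z'$, the reverse triangle inequality together with submultiplicativity of the spectral norm and the fact that the spectral norm is dominated by the Frobenius norm gives
\[
\bigl|\,\|M^{*}Z\| - \|M^{*}Z'\|\,\bigr| \leq \|M^{*}(Z-Z')\| \leq \|M^{*}\|\,\|Z-Z'\|_{F} \leq \eta\,\|Z-Z'\|_{F},
\]
since $\|M^{*}\| = \|M\| \leq \eta$. Applying the standard Gaussian Lipschitz concentration inequality (for an $L$-Lipschitz $g$ on $\mathbb{R}^{d}$ with standard normal input, $\mathbb{P}(|g-\mathbb{E} g|\geq s) \leq 2\exp(-s^{2}/(2L^{2}))$) with $L=\eta$ and $s=\eta t$ then yields $\|M^{*}Z\| \leq \mathbb{E} f(Z) + \eta t$ with probability at least $1 - 2\exp(-t^{2}/2)$.

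It remains to bound $\mathbb{E} f(Z)$. Writing a full singular value decomposition $M = U\Sigma V^{*}$ with $U\in\mathbb{R}^{m\times m}$, $V\in\mathbb{R}^{n\times n}$ orthogonal and $\Sigma\in\mathbb{R}^{m\times n}$, we have $\|M^{*}Z\| = \|\Sigma^{*}U^{*}Z\|$; since $U$ is orthogonal, $\tilde Z \triangleq U^{*}Z$ is again a matrix with i.i.d.\ standard normal entries, and since $\Sigma^{*}$ is supported on its top-left $n\times n$ block with diagonal entries at most $\eta$ (here is where $m\geq n$ is used to keep the indexing clean), $\|\Sigma^{*}\tilde Z\| \leq \eta\,\|\tilde Z_{1}\|$, where $\tilde Z_{1}\in\mathbb{R}^{n\times k}$ consists of the first $n$ rows of $\tilde Z$ and is standard Gaussian. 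Invoking the sharp bound on the expected spectral norm of a Gaussian matrix (via Gordon's/Slepian's inequality), $\mathbb{E}\|\tilde Z_{1}\| \leq \sqrt{n} + \sqrt{k}$, so $\mathbb{E} f(Z) \leq \eta(\sqrt{n}+\sqrt{k}) \leq \eta\sqrt{2(n+k)}$, the last step by $(a+b)^{2}\leq 2(a^{2}+b^{2})$. Combining with the concentration statement gives the claim.

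The argument is essentially routine; the only step requiring care is the expectation bound, specifically reducing to a rectangular standard Gaussian block via the SVD and invoking the sharp $\sqrt{n}+\sqrt{k}$ estimate rather than a cruder one, since the clean final constant $\sqrt{2(n+k)}$ relies on it. An alternative to this step that avoids citing the Gaussian-matrix expectation bound would be an $\varepsilon$-net plus Gaussian-comparison argument applied directly to $\sup_{\|x\|=\|y\|=1} x^{*}M^{*}Zy$, but the SVD reduction is cleaner and loses nothing.
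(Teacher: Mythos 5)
Your proof is correct: the Lipschitz bound, the Gaussian concentration step, the SVD reduction to an $n\times k$ standard Gaussian block, and the $\sqrt{n}+\sqrt{k}\leq\sqrt{2(n+k)}$ expectation bound all check out and together give exactly the stated inequality. Note that the paper does not prove this lemma itself but imports it from the cited reference (Lemma A.1 of Oymak and Ozay), whose argument is essentially the same concentration-plus-expectation route you took, so there is nothing further to reconcile.
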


\begin{proposition}
For any fixed $ \delta> 0$, let $\gamma_{p}\triangleq \mathcal{\bar{R}}_{T}^{\frac{1}{2}} (\sqrt{2(n+mp)}+\sqrt{2\log{\frac{2}{\delta}}})$, and $\gamma_{f}\triangleq \mathcal{\bar{R}}_{T}^{\frac{1}{2}} (\sqrt{2(n+mf)}+\sqrt{2\log{\frac{2}{\delta}}})$. For $N\geq n$, each of these inequalities hold with probability at least $1-\delta$:
\begin{equation*}
\|\hat{X}_{0}E_{-}^{*}\|\leq \|\hat{X}_{0}\|\gamma_{p},
\end{equation*}
\begin{equation*}
\|\hat{X}_{0}E_{+}^{*}\|\leq \|\hat{X}_{0}\|\gamma_{f}.
\end{equation*}
\label{proposition:bound XE}
\end{proposition}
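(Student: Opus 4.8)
\emph{Proof proposal.} The plan is to reduce both bounds to a single application of Lemma~\ref{lemma:cross-term} after peeling off the block‑diagonal innovation covariances. Recall from the proof of Proposition~\ref{proposition:lower bound noise} that, as a matrix identity, $E_{-}=\diag(\bar{R}_{0}^{1/2},\dots,\bar{R}_{p-1}^{1/2})\,U_{-}$, where $U_{-}=[\mathbf{u}_{1}\ \cdots\ \mathbf{u}_{N}]\in\mathbb{R}^{mp\times N}$ has i.i.d.\ standard normal entries; the analogous identity $E_{+}=\diag(\bar{R}_{p}^{1/2},\dots,\bar{R}_{p+f-1}^{1/2})\,U_{+}$ holds with $U_{+}\in\mathbb{R}^{mf\times N}$ i.i.d.\ standard normal, since the innovations $e_{0}^{i},\dots,e_{T-1}^{i}$ are mutually independent zero‑mean Gaussians. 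Also note that $\hat{X}_{0}$ is deterministic (each $\hat{x}_{0}^{i}=\mu$), so no conditioning argument is needed.

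First I would handle the past term. Writing $\hat{X}_{0}E_{-}^{*}=\big(\hat{X}_{0}U_{-}^{*}\big)\diag(\bar{R}_{0}^{1/2},\dots,\bar{R}_{p-1}^{1/2})$ and using submultiplicativity of the spectral norm together with $\|\bar{R}_{t}^{1/2}\|=\|\bar{R}_{t}\|^{1/2}\le\mathcal{\bar{R}}_{T}^{1/2}$ for every $t\in\{0,\dots,p-1\}\subseteq\{0,\dots,T-1\}$, it suffices to bound $\|\hat{X}_{0}U_{-}^{*}\|$. I would apply Lemma~\ref{lemma:cross-term} with $M=\hat{X}_{0}^{*}\in\mathbb{R}^{N\times n}$ (the hypothesis ``rows $\ge$ columns'' becomes $N\ge n$, which is assumed), $\eta=\|\hat{X}_{0}\|$, and $Z=U_{-}^{*}\in\mathbb{R}^{N\times mp}$, which has i.i.d.\ standard normal entries. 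This gives $\|\hat{X}_{0}U_{-}^{*}\|=\|M^{*}Z\|\le\|\hat{X}_{0}\|\big(\sqrt{2(n+mp)}+t\big)$ with probability at least $1-2\exp(-t^{2}/2)$; taking $t=\sqrt{2\log(2/\delta)}$ makes the failure probability exactly $\delta$, and multiplying through by $\mathcal{\bar{R}}_{T}^{1/2}$ yields $\|\hat{X}_{0}E_{-}^{*}\|\le\|\hat{X}_{0}\|\gamma_{p}$ with probability at least $1-\delta$.

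The future bound is proved in exactly the same way: the factorization $\hat{X}_{0}E_{+}^{*}=\big(\hat{X}_{0}U_{+}^{*}\big)\diag(\bar{R}_{p}^{1/2},\dots,\bar{R}_{p+f-1}^{1/2})$, the estimate $\|\bar{R}_{t}^{1/2}\|\le\mathcal{\bar{R}}_{T}^{1/2}$ for $t\in\{p,\dots,p+f-1\}=\{p,\dots,T-1\}$, and Lemma~\ref{lemma:cross-term} applied with $Z=U_{+}^{*}\in\mathbb{R}^{N\times mf}$ and the same $t$ give $\|\hat{X}_{0}E_{+}^{*}\|\le\|\hat{X}_{0}\|\gamma_{f}$ with probability at least $1-\delta$. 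There is no serious obstacle here; the only points requiring care are matching the transpose orientation so that the lemma's ``$M^{*}Z$'' equals $\hat{X}_{0}U_{\pm}^{*}$ (hence $M=\hat{X}_{0}^{*}$, and the role of ``$m\ge n$'' is played by $N\ge n$), and checking that all innovation‑covariance indices appearing in the block‑diagonal scalings lie in $\{0,\dots,T-1\}$ so their norms are controlled by $\mathcal{\bar{R}}_{T}$.
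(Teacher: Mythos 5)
Your proposal is correct and follows essentially the same route as the paper's proof: factor $E_{\pm}$ as a block-diagonal matrix of $\bar{R}_{t}^{1/2}$ factors times a standard Gaussian matrix, bound the diagonal factor by $\mathcal{\bar{R}}_{T}^{1/2}$, and apply Lemma~\ref{lemma:cross-term} with $t=\sqrt{2\log(2/\delta)}$. The only difference is that you spell out details the paper leaves implicit (the identification $M=\hat{X}_{0}^{*}$, $Z=U_{\pm}^{*}$, the role of $N\geq n$, and the submultiplicativity step), which is a faithful elaboration rather than a different argument.
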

\begin{proof}
We will only show the first inequality as the second one is almost identical. We can rewrite $E_{-}=\diag(\bar{R}_{0}^{\frac{1}{2}},\cdots,\bar{R}_{p-1}^{\frac{1}{2}})U_{-}$, where $U_{-}=
\begin{bmatrix}
\mathbf{u}_{1}&\mathbf{u}_{2}&\cdots&\mathbf{u}_{N}\\
\end{bmatrix}$, where $\mathbf{u}_{i}$ are i.i.d random vectors with $\mathbf{u}_{i}\sim \mathcal{N}(0,I_{mp})$.
Applying Lemma \ref{lemma:cross-term}, we obtain
\begin{equation*}
\begin{aligned}
\|\hat{X}_{0}E_{-}^{*}\|&=\|\hat{X}_{0}U_{-}^{*}\diag(\bar{R}_{0}^{\frac{1}{2}},\cdots,\bar{R}_{p-1}^{\frac{1}{2}})\|\\
&\leq \|\hat{X}_{0}\|\mathcal{\bar{R}}_{T}^{\frac{1}{2}}(\sqrt{2(n+mp)}+t)
\end{aligned}
\end{equation*}
with probability at least $1-2\exp(\frac{-t^{2}}{2})$.
Finally, setting $\delta=2\exp(\frac{-t^{2}}{2})$, we have $t=\sqrt{2\log{\frac{2}{\delta}}}$. Plugging $t$ into the above inequality, we get the desired form.  
\end{proof}

Now we are ready to show that $\|(Y_{-}Y_{-}^{*})^{-1}\|$ is decreasing with a rate of $\mathcal{O}({\frac{1}{N}})$.
\begin{lemma}
Fix any $\delta >0$ and let $N \geq \max\{N_0,N_1\}$, where $N_{0}= 8mp+16\log{\frac{1}{\delta}}$, and $N_1 \triangleq\frac{16\|\mathcal{T}_{p}^{0}\|\|\mathcal{O}_{p}\|\|\hat{X}_{0}\|\gamma_{p}}{\sigma_{E}}$. Define $\zeta \triangleq \mathcal{O}_{p}\mu\mu^{*}\mathcal{O}_{p}^{*}$. We have
\begin{equation*}
  \|(Y_{-}Y_{-}^{*})^{-1}\|\leq \frac{8}{N\sigma_{min}(\sigma_{E}I_{mp}+8\zeta)}
\end{equation*}
\label{lemma:Output PE}
with probability at least $1-2\delta$.
\end{lemma}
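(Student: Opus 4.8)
The plan is to work directly from the expansion \eqref{YY^{*}} of $Y_{-}Y_{-}^{*}$ and lower-bound each of its four blocks in the positive semidefinite order. For the first block, the key observation is that the Kalman filter in \eqref{KalmanFilterSystem} is initialized at $\hat{x}_{0}^{i}=\mu$ for every trajectory, so $\hat{X}_{0}=\begin{bmatrix}\mu&\cdots&\mu\end{bmatrix}$ has rank one and $\hat{X}_{0}\hat{X}_{0}^{*}=\sum_{i=1}^{N}\mu\mu^{*}=N\mu\mu^{*}$; hence $\mathcal{O}_{p}\hat{X}_{0}\hat{X}_{0}^{*}\mathcal{O}_{p}^{*}=N\,\mathcal{O}_{p}\mu\mu^{*}\mathcal{O}_{p}^{*}=N\zeta\succeq 0$. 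For the second (noise) block, Proposition \ref{proposition:lower bound noise} (applicable since $N\geq N_{0}$) gives $\mathcal{T}_{p}^{0}E_{-}E_{-}^{*}\mathcal{T}_{p}^{0*}\succeq \tfrac{N}{4}\sigma_{E}I_{mp}$ on an event of probability at least $1-\delta$.

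Next I would control the two cross blocks in spectral norm. By submultiplicativity of $\|\cdot\|$ together with Proposition \ref{proposition:bound XE} (applicable since $N\geq n$), on an event of probability at least $1-\delta$ we have $\|\mathcal{O}_{p}\hat{X}_{0}E_{-}^{*}\mathcal{T}_{p}^{0*}\|\leq \|\mathcal{O}_{p}\|\,\|\mathcal{T}_{p}^{0}\|\,\|\hat{X}_{0}\|\,\gamma_{p}$, and the identical bound holds for its conjugate transpose. Since any Hermitian matrix $M$ obeys $M\succeq -\|M\|\,I$, the sum of the two cross blocks is $\succeq -2\|\mathcal{O}_{p}\|\,\|\mathcal{T}_{p}^{0}\|\,\|\hat{X}_{0}\|\,\gamma_{p}\,I_{mp}$. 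The definition of $N_{1}$ is precisely calibrated so that $2\|\mathcal{O}_{p}\|\,\|\mathcal{T}_{p}^{0}\|\,\|\hat{X}_{0}\|\,\gamma_{p}\leq \tfrac{N\sigma_{E}}{8}$ whenever $N\geq N_{1}$, so the cross blocks together contribute at least $-\tfrac{N}{8}\sigma_{E}I_{mp}$.

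Finally, intersecting the two probabilistic events via a union bound (cost $2\delta$) and summing the three lower bounds yields
\begin{equation*}
Y_{-}Y_{-}^{*}\succeq N\zeta+\tfrac{N}{4}\sigma_{E}I_{mp}-\tfrac{N}{8}\sigma_{E}I_{mp}=\tfrac{N}{8}\bigl(\sigma_{E}I_{mp}+8\zeta\bigr).
\end{equation*}
Because $\zeta\succeq 0$ and $\sigma_{E}>0$, the right-hand side is positive definite, so $Y_{-}Y_{-}^{*}$ is invertible and $\|(Y_{-}Y_{-}^{*})^{-1}\|\leq \tfrac{8}{N\,\sigma_{min}(\sigma_{E}I_{mp}+8\zeta)}$, as claimed. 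I expect the only delicate points to be the bookkeeping of the positive semidefinite inequalities (in particular applying the cross-term bound as a two-sided operator inequality rather than just in norm) and noting that $\|\hat{X}_{0}\|=\sqrt{N}\,\|\mu\|$ itself grows with $N$, so that $N\geq N_{1}$ is in fact only a constant-order requirement on $N$; neither is a genuine obstacle.
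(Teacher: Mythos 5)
Your proposal is correct and takes essentially the same route as the paper's proof: the same four-block expansion of $Y_{-}Y_{-}^{*}$, the same invocation of Proposition \ref{proposition:lower bound noise} and Proposition \ref{proposition:bound XE} with a union bound, the same use of $\hat{x}_{0}^{i}=\mu$ to identify the first block with $N\zeta$, and the same calibration of $N_{1}$ giving $Y_{-}Y_{-}^{*}\succeq \frac{N}{8}(\sigma_{E}I_{mp}+8\zeta)$. The only difference is presentational: you phrase the cross-term control as a two-sided Löwner inequality ($M\succeq-\|M\|I$ for the Hermitian sum), whereas the paper bounds the quadratic form $u^{*}Y_{-}Y_{-}^{*}u$ over unit vectors via Cauchy--Schwarz, which is equivalent.
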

\begin{proof}
Recall the explicit form of $Y_{-}Y_{-}^{*}$ in \eqref{YY^{*}}. Letting $u\in \mathbb{R}^{mp}$ be an arbitrary unit vector, we can write
\begin{equation*}
\begin{aligned}
u^{*}Y_{-}Y_{-}^{*}u&=u^{*}\mathcal{O}_{p}\hat{X}_{0}\hat{X}_{0}^{*}\mathcal{O}_{p}^{*}u+u^{*}\mathcal{T}_{p}^{0}E_{-}E_{-}^{*}\mathcal{T}_{p}^{0*}u\\
&+u^{*}\mathcal{O}_{p}\hat{X}_{0}E_{-}^{*}\mathcal{T}_{p}^{0*}u+u^{*}\mathcal{T}_{p}^{0}E_{-}\hat{X}_{0}^{*}\mathcal{O}_{p}^{*}u\\
&\geq u^{*}\mathcal{O}_{p}\hat{X}_{0}\hat{X}_{0}^{*}\mathcal{O}_{p}^{*}u+u^{*}\mathcal{T}_{p}^{0}E_{-}E_{-}^{*}\mathcal{T}_{p}^{0*}u\\
&-2\|\hat{X}_{0}E_{-}^{*}\|\|\mathcal{T}_{p}^{0}\|\|\mathcal{O}_{p}\|, 
\end{aligned}
\end{equation*}
where we used the Cauchy–Schwarz inequality. Fixing $ \delta >0$, letting $N\geq N_{0}$, applying Proposition \ref{proposition:lower bound noise} and Proposition \ref{proposition:bound XE}, and using a union bound,  we have 
\begin{equation*}
\begin{aligned}
u^{*}Y_{-}Y_{-}^{*}u&\geq u^{*}\mathcal{O}_{p}\hat{X}_{0}\hat{X}_{0}^{*}\mathcal{O}_{p}^{*}u+\frac{N}{4}\sigma_{E}-2\|\mathcal{T}_{p}^{0}\|\|\mathcal{O}_{p}\|\|\hat{X}_{0}\|\gamma_{p} \label{uYYu}
\end{aligned}
\end{equation*}
with probability at least $1-2\delta$.

Conditioning on the above event and letting $N\geq N_1=\frac{16\|\mathcal{T}_{p}^{0}\|\|\mathcal{O}_{p}\|\|\hat{X}_{0}\|\gamma_{p}}{\sigma_{E}}$, we have 
\begin{equation*}
\begin{aligned} 
u^{*}Y_{-}Y_{-}^{*}u&\geq  u^{*}\mathcal{O}_{p}\hat{X}_{0}\hat{X}_{0}^{*}\mathcal{O}_{p}^{*}u+\frac{N}{8}\sigma_{E}\\&=u^{*}\mathcal{O}_{p}N\mu\mu^{*}\mathcal{O}_{p}^{*}u+u^{*}\frac{N}{8}\sigma_{E}I_{mp}u,
\end{aligned}
\end{equation*}
where the equality is due to the fact that $\hat{x}^{i}_{0}=\mu$ for all $i$.

Consequently, we have
\begin{equation*}
\begin{aligned} \
Y_{-}Y_{-}^{*}&\succeq \mathcal{O}_{p}N\mu\mu^{*}\mathcal{O}_{p}^{*}+\frac{N}{8}\sigma_{E}I_{mp}.
\end{aligned}
\end{equation*}

Taking the inverse we get the desired result.
\end{proof}

To see that $N$ will eventually be greater than $N_{1}$ even if $\|\hat{X}_{0}\|$ is non-zero, note that when the system is strictly stable or marginally stable, $\|\mathcal{T}_{p}^{0}\|$ and $\|\mathcal{O}_{p}\|$ will grow no faster than $\mathcal{O}(p^\mathbf{d})$ for some constant $\mathbf{d}$ (see Proposition \ref{proposition:Bound Block-Toepliz Matrix} for $\|\mathcal{T}_{p}^{0}\|$, and \cite[Corollary~E.1]{tsiamis2019finite} for $\|\mathcal{O}_{p}\|$). Further, $\gamma_{p}$ is $\mathcal{O}(p^{\frac{1}{2}})$, and $\|\hat{X}_{0}\|=\sqrt{N}\|\mu\|=\mathcal{O}(\sqrt{N})$. Thus if $p=\mathcal{O}(\log{N})$, $N$ will eventually be greater than $N_{1}$ as $N$ increases.

Now we will show that the term $\|E_{+}E_{-}^{*}\|$ is $\mathcal{O}(\sqrt{N})$. We will leverage the following lemma from\cite[Lemma~1]{dean2019sample} to bound the product of the innovation terms.
\begin{lemma}
Let $f_{i}\in\mathbb{R}^{m}$, $g_{i}\in\mathbb{R}^{n}$ be independent random vectors $f_{i}\sim \mathcal{N}(0,\Sigma_{f})$ and $g_{i}\sim \mathcal{N}(0,\Sigma_{g})$, for $i=1,\cdots,N$. Let $N\geq2(n+m)\log{\frac{1}{\delta}}$. For any fixed $\delta >0$, we have
\begin{equation*}
   \|\sum_{i=1}^{N}f_{i}g_{i}^{*}\|\leq4\|\Sigma_{f}\|^{\frac{1}{2}}\|\Sigma_{g}\|^{\frac{1}{2}}\sqrt{N(m+n)\log{\frac{9}{\delta}}}.
\end{equation*}
\label{lemma:upper bound two independent gaussian}
with probability at least $1-\delta$.
\end{lemma}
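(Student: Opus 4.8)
The plan is to reduce the statement to the case of two independent \emph{standard} Gaussian matrices and then exploit a conditional Gaussianity identity. Write $f_i=\Sigma_f^{1/2}\tilde f_i$ and $g_i=\Sigma_g^{1/2}\tilde g_i$, where $\tilde f_i\sim\mathcal N(0,I_m)$ and $\tilde g_i\sim\mathcal N(0,I_n)$ are mutually independent, and collect them into $\tilde F=[\tilde f_1\ \cdots\ \tilde f_N]\in\mathbb R^{m\times N}$ and $\tilde G=[\tilde g_1\ \cdots\ \tilde g_N]\in\mathbb R^{n\times N}$, so that $\sum_{i=1}^{N}f_ig_i^{*}=\Sigma_f^{1/2}\,\tilde F\tilde G^{*}\,\Sigma_g^{1/2}$. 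By submultiplicativity of the spectral norm, $\|\sum_i f_ig_i^{*}\|\le\|\Sigma_f\|^{1/2}\|\Sigma_g\|^{1/2}\,\|\tilde F\tilde G^{*}\|$, so it suffices to prove $\|\tilde F\tilde G^{*}\|\le 4\sqrt{N(m+n)\log(9/\delta)}$ with probability at least $1-\delta$.

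The key observation is that, conditioned on $\tilde G$, the $j$-th row of $\tilde F\tilde G^{*}$, viewed as a column vector, equals $\sum_{i=1}^{N}\tilde F_{ji}\,\tilde g_i$; since the entries $\tilde F_{ji}$ are i.i.d.\ standard normal and the rows use disjoint entries of $\tilde F$, the rows of $\tilde F\tilde G^{*}$ are conditionally i.i.d.\ $\mathcal N(0,\tilde G\tilde G^{*})$. Hence, conditionally, $\tilde F\tilde G^{*}\stackrel{d}{=}W\,(\tilde G\tilde G^{*})^{1/2}$ for a matrix $W\in\mathbb R^{m\times n}$ with i.i.d.\ standard normal entries that is independent of $\tilde G$, and therefore $\|\tilde F\tilde G^{*}\|\le\|W\|\,\|(\tilde G\tilde G^{*})^{1/2}\|=\|W\|\,\|\tilde G\|$. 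I would then invoke the standard concentration bound for the largest singular value of a Gaussian matrix (Davidson--Szarek / Gordon): $\|W\|\le\sqrt m+\sqrt n+t_1$ with probability at least $1-e^{-t_1^{2}/2}$, and $\|\tilde G\|\le\sqrt N+\sqrt n+t_2$ with probability at least $1-e^{-t_2^{2}/2}$. Splitting on the event that $\|\tilde G\|$ exceeds its bound and applying a union bound yields $\|\tilde F\tilde G^{*}\|\le(\sqrt m+\sqrt n+t_1)(\sqrt N+\sqrt n+t_2)$ with probability at least $1-e^{-t_1^{2}/2}-e^{-t_2^{2}/2}$.

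It remains to choose $t_1,t_2$ of order $\sqrt{\log(1/\delta)}$ so the total failure probability is at most $\delta$, and then use the hypothesis $N\ge 2(n+m)\log\frac1\delta$ to collapse the bound into the stated form. Concretely, the hypothesis forces $\sqrt N$ to dominate each of $\sqrt n$, $\sqrt m$, and $\sqrt{\log(1/\delta)}$, so $\sqrt N+\sqrt n+t_2$ is a constant multiple of $\sqrt N$, while $\sqrt m+\sqrt n+t_1\le\sqrt{2(m+n)}+\sqrt{2\log\frac1\delta}$ is, up to a constant, $\sqrt{(m+n)\log(9/\delta)}$ (using $\log(9/\delta)\ge 1$ and $m+n\ge 1$); multiplying and tracking constants carefully gives the claimed leading constant $4$.

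The difficulty here is bookkeeping rather than conceptual. One must use the conditional identity $\tilde F\tilde G^{*}\stackrel{d}{=}W(\tilde G\tilde G^{*})^{1/2}$ so that the randomness of $W$ is genuinely fresh given $\tilde G$, which is what lets the two concentration estimates multiply; and one must track the numerical constants closely, choosing $t_1,t_2$ and exploiting $N\ge 2(n+m)\log\frac1\delta$ so that every $\sqrt n$, $\sqrt m$, and $\sqrt{\log(1/\delta)}$ term is absorbed into the single clean factor $\sqrt{N(m+n)\log(9/\delta)}$ (a sharper operator-norm bound for $W$ that exploits $\min(m,n)$ helps tighten the constant to $4$). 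An equivalent route avoids the conditional step: discretize $S^{m-1}$ and $S^{n-1}$ by $1/4$-nets of sizes at most $9^{m}$ and $9^{n}$ respectively — presumably the source of the $\log 9$ — bound each fixed bilinear form $u^{*}\tilde F\tilde G^{*}v$, which conditioned on $\tilde G$ is $\mathcal N(0,\|\tilde G^{*}v\|^{2})$ with $\|\tilde G^{*}v\|\le\|\tilde G\|$, and union-bound over the nets; this reaches the same estimate.
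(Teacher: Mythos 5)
You should first note that the paper contains no proof of this lemma: it is quoted directly from \cite[Lemma~1]{dean2019sample}, whose argument is the scalarize-then-net one (for fixed unit vectors $u,v$, bound $u^{*}(\sum_i f_ig_i^{*})v=\sum_i(u^{*}f_i)(g_i^{*}v)$ as a sum of $N$ products of independent scalar Gaussians via sub-exponential/Bernstein concentration, then union bound over $1/4$-nets of the two spheres, which is where the $9$ and the side condition on $N$ come from). Your main route, $\|\tilde F\tilde G^{*}\|\le\|W\|\,\|\tilde G\|$ with $W$ conditionally standard Gaussian, is a valid inequality, but the final bookkeeping step has a genuine gap: the claim that $N\ge 2(m+n)\log\frac{1}{\delta}$ forces $\sqrt N$ to dominate $\sqrt m$, $\sqrt n$ and $\sqrt{\log(1/\delta)}$ is false whenever $\delta>e^{-1/2}$, since then the hypothesis does not even force $N\ge m+n$. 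For instance, $m=n=N=1$, $\delta=0.9$ is admissible, and your bound $(\sqrt m+\sqrt n+t_1)(\sqrt N+\sqrt n+t_2)$ with $t_1=t_2=\sqrt{2\log(2/\delta)}$ evaluates to about $10.7$, while the claimed bound $4\sqrt{N(m+n)\log(9/\delta)}\approx 8.6$; similarly $m=n=1$, $N=2$, $\delta=0.61$ gives about $14.0$ versus $13.1$. More structurally, for $\delta$ close to $1$ the hypothesis permits $m+n\gg N$, and then $\|W\|\,\|\tilde G\|\approx(\sqrt m+\sqrt n)(\sqrt n+\sqrt N)$ scales like $m+n$ while the target scales like $\sqrt{N(m+n)}$; no choice of $t_1,t_2$ or sharper bound on $\|W\|$ can repair this, because the loss sits in the inequality $\|\tilde F\tilde G^{*}\|\le\|W\|\,\|\tilde G\|$ itself, which charges the full $\sqrt n+\sqrt N$ of $\|\tilde G\|$ where only $\approx\sqrt N$ should appear.

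Your alternative sketch at the end is essentially the cited proof and is the right way to obtain the stated constants, but as written it reintroduces the same loss: for a fixed net point $v$ you bound the conditional standard deviation $\|\tilde G^{*}v\|$ by $\|\tilde G\|$, paying $\sqrt n$ again. Instead, for fixed $v$ the quantity $\|\tilde G^{*}v\|^{2}$ is a $\chi^{2}_{N}$ variable concentrating at scale $N$ (equivalently, treat $\sum_i(u^{*}\tilde f_i)(\tilde g_i^{*}v)$ directly as a sum of $N$ products of independent scalar Gaussians and apply Bernstein); union bounding that estimate over $1/4$-nets of sizes $9^{m}$ and $9^{n}$ is exactly what yields $4\sqrt{N(m+n)\log(9/\delta)}$, with the hypothesis $N\ge 2(m+n)\log\frac{1}{\delta}$ serving to keep the Bernstein bound in its sub-Gaussian regime. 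So keep the net scalarization, but drop the operator-norm bounds on $\|W\|$ and $\|\tilde G\|$; with your current two-factor estimate the stated inequality is not established on all of the parameter range the lemma covers.
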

\begin{proposition}\label{proposition:Upper bound noise}
For any fixed $ \delta >0$, let $N\geq N_{2}\triangleq 2(mf+mp)\log{\frac{1}{\delta}}$. We have
\begin{equation*}
\begin{aligned}
   &\|\mathcal{T}_{f}^{p}E_{+}E_{-}^{*}\mathcal{T}_{p}^{0*}\|\leq 4\|\mathcal{T}_f^{p}\|\|\mathcal{T}_p^{0}\|\mathcal{\bar{R}}_{T}\sqrt{N(mf+mp)\log{\frac{9}{\delta}}}
\end{aligned}
   \end{equation*}
 with probability at least $1-\delta$.
\end{proposition}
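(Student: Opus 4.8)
The plan is to peel off the two deterministic Toeplitz factors and reduce to a product of two independent Gaussian ensembles. By submultiplicativity of the spectral norm,
\begin{equation*}
\|\mathcal{T}_{f}^{p}E_{+}E_{-}^{*}\mathcal{T}_{p}^{0*}\|\leq \|\mathcal{T}_{f}^{p}\|\,\|\mathcal{T}_{p}^{0}\|\,\|E_{+}E_{-}^{*}\|,
\end{equation*}
using $\|\mathcal{T}_{p}^{0*}\|=\|\mathcal{T}_{p}^{0}\|$. Since $E_{+}=[\,E_{+}^{1}\ \cdots\ E_{+}^{N}\,]$ and $E_{-}=[\,E_{-}^{1}\ \cdots\ E_{-}^{N}\,]$, block multiplication gives $E_{+}E_{-}^{*}=\sum_{i=1}^{N}E_{+}^{i}E_{-}^{i*}$, a sum of independent random matrices indexed by the trajectory, so it remains to bound this sum.

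Next I would identify the relevant distributions so as to invoke Lemma \ref{lemma:upper bound two independent gaussian}. For each trajectory $i$, the innovations $e_{0}^{i},\dots,e_{p+f-1}^{i}$ are independent zero-mean Gaussian vectors with covariances $\bar{R}_{0},\dots,\bar{R}_{p+f-1}$; hence $E_{-}^{i}\sim\mathcal{N}(0,\diag(\bar{R}_{0},\dots,\bar{R}_{p-1}))$ and $E_{+}^{i}\sim\mathcal{N}(0,\diag(\bar{R}_{p},\dots,\bar{R}_{p+f-1}))$, and because these two vectors are assembled from disjoint blocks of the (white) innovation sequence, $E_{+}^{i}$ is independent of $E_{-}^{i}$; across trajectories the pairs $(E_{+}^{i},E_{-}^{i})$ are i.i.d. Applying Lemma \ref{lemma:upper bound two independent gaussian} with $f_{i}=E_{+}^{i}\in\mathbb{R}^{mf}$, $g_{i}=E_{-}^{i}\in\mathbb{R}^{mp}$, $\Sigma_{f}=\diag(\bar{R}_{p},\dots,\bar{R}_{p+f-1})$, $\Sigma_{g}=\diag(\bar{R}_{0},\dots,\bar{R}_{p-1})$, the lemma's sample-size hypothesis $N\geq 2(n+m)\log\frac{1}{\delta}$, under the substitution of $mf$ for the role of $m$ and $mp$ for the role of $n$, becomes exactly $N\geq N_{2}=2(mf+mp)\log\frac{1}{\delta}$. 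For the covariance norms, a block-diagonal symmetric matrix has spectral norm equal to the largest spectral norm among its diagonal blocks, and since $p+f=T$ every block $\bar{R}_{t}$ appearing here satisfies $\|\bar{R}_{t}\|\leq \mathcal{\bar{R}}_{T}$; thus $\|\Sigma_{f}\|\leq\mathcal{\bar{R}}_{T}$ and $\|\Sigma_{g}\|\leq\mathcal{\bar{R}}_{T}$. The lemma then yields, with probability at least $1-\delta$,
\begin{equation*}
\Big\|\sum_{i=1}^{N}E_{+}^{i}E_{-}^{i*}\Big\|\leq 4\,\mathcal{\bar{R}}_{T}^{\frac{1}{2}}\mathcal{\bar{R}}_{T}^{\frac{1}{2}}\sqrt{N(mf+mp)\log\tfrac{9}{\delta}}=4\,\mathcal{\bar{R}}_{T}\sqrt{N(mf+mp)\log\tfrac{9}{\delta}},
\end{equation*}
and combining with the first display gives the claim.

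The only genuine obstacle is the independence of $E_{+}^{i}$ from $E_{-}^{i}$: this is precisely where the whiteness of the Kalman innovations across time is used, and it is what allows the sharp "two independent Gaussians" bound of Lemma \ref{lemma:upper bound two independent gaussian} instead of a coarser Wishart-type estimate. The remaining ingredients — the spectral-norm bookkeeping for block-diagonal covariances and the dimensional substitution in the sample-size threshold — are routine.
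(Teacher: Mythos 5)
Your proposal is correct and follows essentially the same route as the paper: identify $E_{+}^{i}\sim\mathcal{N}(0,\diag(\bar{R}_{p},\dots,\bar{R}_{p+f-1}))$ and $E_{-}^{i}\sim\mathcal{N}(0,\diag(\bar{R}_{0},\dots,\bar{R}_{p-1}))$ as independent Gaussians (by whiteness of the innovations), apply Lemma \ref{lemma:upper bound two independent gaussian} with the dimensions $mf,mp$ and covariance norms bounded by $\mathcal{\bar{R}}_{T}$, then pull out $\|\mathcal{T}_{f}^{p}\|\|\mathcal{T}_{p}^{0}\|$ by submultiplicativity. You merely spell out the bookkeeping (the block-sum form of $E_{+}E_{-}^{*}$ and the block-diagonal norm bound) that the paper leaves implicit.
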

\begin{proof}
Note that the columns of $E_{+}$ are independent Gaussian random vectors, i.e., $E_{+}^{i}\sim \mathcal{N}(0,\diag(\bar{R}_{p},\cdots,\bar{R}_{p+f-1}))$. Similarly, the columns of $E_{-}$ are independent Gaussian random vectors, i.e., $E_{-}^{i}\sim \mathcal{N}(0,\diag(\bar{R}_{0},\cdots,\bar{R}_{p-1}))$. Further, $E_{+}^{i}$ and $E_{-}^{i}$ are independent from classical results of Kalman filtering theory \cite{anderson2012optimal}. Applying Lemma \ref{lemma:upper bound two independent gaussian}, we get the desired result.
\end{proof}

With Proposition \ref{lemma:upper bound two independent gaussian}, we are now in place to prove the bound on the estimation error of $G$ .

\begin{theorem}[Bound on estimation error of $G$] \label{thm:Bound for $G$ general}
Consider the Kalman filter form \eqref{KalmanFilterSystem} of system \eqref{originalStateSpace} under Assumptions \ref{asm: standard} and \ref{asm: controllable}, and let ${G}$ be defined as in \eqref{True G}. For any fixed $\delta >0$, let $\hat{G}$ defined in \eqref{Ghat} be the output of the linear regression described in Algorithm \ref{Linear Regression} given $N$ trajectories of outputs, where $N\geq \max\{N_0,N_1,N_2\}$, where $N_{0}= 8mp+16\log{\frac{1}{\delta}},N_{1}=\frac{16\|\mathcal{T}_{p}^{0}\|\|\mathcal{O}_{p}\|\|\hat{X}_{0}\|\gamma_{p}}{\sigma_{E}},N_{2}=2(mf+mp)\log{\frac{1}{\delta}}$. We have:
\begin{equation*}
\begin{aligned}
\|\hat{G}-G\|\leq \frac{\epsilon_{1}}{\sqrt{N}\sigma_{min}(\sigma_{E}I_{mp}+8\zeta)}+\frac{\|\hat{X}_{0}\|\epsilon_{2}+\|\hat{X}_{0}\|^2\epsilon_{3}}{N\sigma_{min}(\sigma_{E}I_{mp}+8\zeta)}
\end{aligned}
\end{equation*}
with probability at least $1-4\delta$, where 
\begin{equation*}
\begin{aligned}
&\epsilon_{1}=32\|\mathcal{T}_f^{p}\|\|\mathcal{T}_p^{0}\|\mathcal{\bar{R}}_{T}\sqrt{(mf+mp)\log{\frac{9}{\delta}}}\\
&\epsilon_{2}=8\gamma_{f}\|\mathcal{T}_{f}^{p}\|\|\mathcal{O}_{p}\|\\
&+8\|(A-K_{p-1}C)\cdots(A-K_{0}C)\|\gamma_{p}\|\mathcal{T}_{p}^{0}\|\|\mathcal{O}_{f}\|,\\
&\epsilon_{3}=8\|\mathcal{O}_{f}\|\|\mathcal{O}_{p}\|\|(A-K_{p-1}C)\cdots(A-K_{0}C)\|,\\
&\gamma_{p}= \mathcal{\bar{R}}_{T}^{\frac{1}{2}} (\sqrt{2(n+mp)}+\sqrt{{2\log{\frac{2}{\delta}}}}),\\ &\gamma_{f}= \mathcal{\bar{R}}_{T}^{\frac{1}{2}} (\sqrt{2(n+mf)}+\sqrt{{2\log{\frac{2}{\delta}}}}), \quad \zeta \triangleq \mathcal{O}_{p}\mu\mu^{*}\mathcal{O}_{p}^{*}.\\
\end{aligned}
\end{equation*}
\end{theorem}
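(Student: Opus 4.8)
The plan is to start from the exact error decomposition \eqref{error_G}, take spectral norms, and control each factor using the propositions already established. Write
\[
\hat{G}-G=\underbrace{\mathcal{T}_{f}^{p}E_{+}Y_{-}^{*}}_{\text{(I)}}(Y_{-}Y_{-}^{*})^{-1}+\underbrace{\mathcal{O}_{f}\Phi\,\hat{X}_{0}Y_{-}^{*}}_{\text{(II)}}(Y_{-}Y_{-}^{*})^{-1},
\]
where $\Phi\triangleq (A-K_{p-1}C)\cdots(A-K_{0}C)$. The factor $\|(Y_{-}Y_{-}^{*})^{-1}\|$ is handled directly by Lemma \ref{lemma:Output PE}, giving the common denominator $N\sigma_{min}(\sigma_{E}I_{mp}+8\zeta)$ up to the factor $8$; this requires $N\geq\max\{N_0,N_1\}$ and costs probability $2\delta$. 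So the remaining work is to bound $\|\mathcal{T}_{f}^{p}E_{+}Y_{-}^{*}\|$ and $\|\mathcal{O}_{f}\Phi\hat{X}_{0}Y_{-}^{*}\|$, each at rate $\mathcal{O}(\sqrt{N})$ (for the first) and $\mathcal{O}(\sqrt{N})$ or $\mathcal{O}(N)$ (for the second, depending on whether one pays a $\|\hat{X}_{0}\|$ or $\|\hat{X}_{0}\|^2$).

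For term (I): substitute $Y_{-}=\mathcal{O}_{p}\hat{X}_{0}+\mathcal{T}_{p}^{0}E_{-}$, so $\mathcal{T}_{f}^{p}E_{+}Y_{-}^{*}=\mathcal{T}_{f}^{p}E_{+}\hat{X}_{0}^{*}\mathcal{O}_{p}^{*}+\mathcal{T}_{f}^{p}E_{+}E_{-}^{*}\mathcal{T}_{p}^{0*}$. The second piece is bounded by Proposition \ref{proposition:Upper bound noise} (needs $N\geq N_2$, costs $\delta$), yielding $4\|\mathcal{T}_f^{p}\|\|\mathcal{T}_p^{0}\|\bar{\mathcal{R}}_{T}\sqrt{N(mf+mp)\log\frac{9}{\delta}}$; the first piece is bounded via Proposition \ref{proposition:bound XE} applied to $E_{+}$, giving $\|\mathcal{T}_f^{p}\|\|\hat{X}_{0}\|\gamma_{f}\|\mathcal{O}_{p}\|$, which is $\mathcal{O}(\sqrt{N})$ since $\|\hat{X}_{0}\|=\mathcal{O}(\sqrt{N})$. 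Collecting, the $E_{+}E_{-}^{*}$ piece contributes the $\epsilon_{1}/(\sqrt{N}\cdot)$ leading term (with the extra factor $8$ from Lemma \ref{lemma:Output PE} and a factor $2$ absorbed so that $\epsilon_{1}=32\|\mathcal{T}_f^{p}\|\|\mathcal{T}_p^{0}\|\bar{\mathcal{R}}_{T}\sqrt{(mf+mp)\log\frac{9}{\delta}}$), and the $E_{+}\hat{X}_{0}^{*}$ piece contributes the $\gamma_{f}\|\mathcal{T}_f^{p}\|\|\mathcal{O}_{p}\|$ summand of $\epsilon_{2}$. For term (II): again split $Y_{-}^{*}=\hat{X}_{0}^{*}\mathcal{O}_{p}^{*}+E_{-}^{*}\mathcal{T}_{p}^{0*}$, so $\mathcal{O}_{f}\Phi\hat{X}_{0}Y_{-}^{*}=\mathcal{O}_{f}\Phi\hat{X}_{0}\hat{X}_{0}^{*}\mathcal{O}_{p}^{*}+\mathcal{O}_{f}\Phi\hat{X}_{0}E_{-}^{*}\mathcal{T}_{p}^{0*}$. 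The first term is deterministic-plus-$\|\hat{X}_{0}\|^2$: bounded by $\|\mathcal{O}_{f}\|\|\Phi\|\|\mathcal{O}_{p}\|\|\hat{X}_{0}\|^2$, giving (after the $\times 8$) exactly $\epsilon_{3}$; the second term uses Proposition \ref{proposition:bound XE} on $E_{-}$, giving $\|\mathcal{O}_{f}\|\|\Phi\|\|\hat{X}_{0}\|\gamma_{p}\|\mathcal{T}_{p}^{0}\|$, the remaining summand of $\epsilon_{2}$.

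Finally, assemble via the triangle inequality: $\|\hat{G}-G\|\leq\|(Y_{-}Y_{-}^{*})^{-1}\|\big(\|\text{(I)}\|+\|\text{(II)}\|\big)$, plug in the $\leq 8/(N\sigma_{min}(\cdot))$ bound, and group the terms into the $1/\sqrt{N}$ part (the pure-noise cross term) and the $1/N$ part (everything carrying $\|\hat{X}_{0}\|$ or $\|\hat{X}_{0}\|^2$, since $\|\hat{X}_{0}\|=\sqrt{N}\|\mu\|$ turns a $\|\hat{X}_{0}\|/N$ into $\mathcal{O}(1/\sqrt{N})$ and $\|\hat{X}_{0}\|^2/N$ into $\mathcal{O}(1)$—but we keep them in the stated form). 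The probability budget is $2\delta$ for Lemma \ref{lemma:Output PE}, $\delta$ for Proposition \ref{proposition:Upper bound noise}, and $\delta$ for the two applications of Proposition \ref{proposition:bound XE} (to $E_{+}$ and $E_{-}$, which can share one invocation since both inequalities hold simultaneously with probability $1-\delta$ in that proposition), for a total of $4\delta$ by the union bound; the sample-size requirement $N\geq\max\{N_0,N_1,N_2\}$ is exactly the union of the hypotheses used. The only mild subtlety—not really an obstacle—is bookkeeping the constants so they match $\epsilon_{1},\epsilon_{2},\epsilon_{3}$ exactly (the factor-of-$2$ from the two sub-pieces of term (I), the factor-of-$8$ from Lemma \ref{lemma:Output PE}, and the factor-of-$4$ from Lemma \ref{lemma:upper bound two independent gaussian}), and making sure every event is conditioned consistently so the union bound is valid; I would carry the two propositions' events through explicitly and invoke the union bound once at the end.
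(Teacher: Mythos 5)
Your proposal is correct and follows essentially the same route as the paper's own proof: the same decomposition of \eqref{error_G}, the same substitution $Y_{-}=\mathcal{O}_{p}\hat{X}_{0}+\mathcal{T}_{p}^{0}E_{-}$ in both error terms, and the same combination of Proposition \ref{proposition:bound XE}, Proposition \ref{proposition:Upper bound noise}, and Lemma \ref{lemma:Output PE} with a union bound, yielding exactly the stated $\epsilon_{1},\epsilon_{2},\epsilon_{3}$. One small bookkeeping remark: your justification for charging only $\delta$ to Proposition \ref{proposition:bound XE} (claiming both inequalities hold \emph{simultaneously} with probability $1-\delta$) is not what that proposition states — the total of $4\delta$ is obtained, as in the paper, because the $E_{-}$ event is already contained in the event underlying Lemma \ref{lemma:Output PE}, so only the $E_{+}$ application costs an additional $\delta$.
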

\begin{proof}
Recall the expression of the error $\hat{G}-G$ in \eqref{error_G}. We first bound the error term $\mathcal{T}_{f}^{p}E_{+}Y_{-}^{*}(Y_{-}Y_{-}^{*})^{-1}$. Using $Y_{-}=\mathcal{O}_{p}\hat{X}_{0}+\mathcal{T}_{p}^{0}E_{-}$, we have 
\begin{equation*}
\begin{aligned}
\|\mathcal{T}_{f}^{p}E_{+}Y_{-}^{*}(Y_{-}Y_{-}^{*})^{-1}\|&\leq\|\mathcal{T}_{f}^{p}E_{+}\hat{X}_{0}^{*}\mathcal{O}_{p}^{*}\|\|(Y_{-}Y_{-}^{*})^{-1}\|\\
&+\|\mathcal{T}_{f}^{p}E_{+}E_{-}^{*}\mathcal{T}_{p}^{0*}\|\|(Y_{-}Y_{-}^{*})^{-1}\|.
\end{aligned}
\end{equation*}

Fix $ \delta >0$ and let $N\geq \max\{N_0,N_1,N_2\}$. Applying Proposition \ref{proposition:bound XE}, Proposition \ref{proposition:Upper bound noise} and Lemma \ref{lemma:Output PE} to the above inequality and using a union bound, we obtain
\begin{equation}
\begin{aligned}
&\|\mathcal{T}_{f}^{p}E_{+}Y_{-}^{*}(Y_{-}Y_{-}^{*})^{-1}\|\leq\frac{8\|\hat{X}_{0}\|\gamma_{f}\|\mathcal{T}_{f}^{p}\|\|\mathcal{O}_{p}\|}{N\sigma_{min}(\sigma_{E}I_{mp}+8\zeta)}\\
&+\frac{32\|\mathcal{T}_f^{p}\|\|\mathcal{T}_p^{0}\|\mathcal{\bar{R}}_{T}\sqrt{(mf+mp)\log{\frac{9}{\delta}}}}{\sqrt{N}\sigma_{min}(\sigma_{E}I_{mp}+8\zeta)} \label{error bound 1}
\end{aligned}
\end{equation}
with probability at least $1-4\delta$. 

Second, we bound the error term $\mathcal{O}_{f}(A-K_{p-1}C)\cdots(A-K_{0}C)\hat{X}_{0}Y_{-}^{*}(Y_{-}Y_{-}^{*})^{-1}$. We have
\begin{equation*}
\begin{aligned}
\|\hat{X}_{0}Y_{-}^{*}\|=\|\hat{X}_{0}\hat{X}_{0}^{*}\mathcal{O}_{p}^{*}+\hat{X}_{0}E_{-}^{*}\mathcal{T}_{p}^{0*}\|.
\end{aligned}
\end{equation*} 
Conditioning on the event $\|\hat{X}_{0}E_{-}^{*}\|\leq\|\hat{X}_{0}\|\gamma_{p}$ from Proposition \ref{proposition:bound XE}, we have
\begin{equation*}
\begin{aligned}
\|\hat{X}_{0}Y_{-}^{*}\|\leq \|\hat{X}_{0}\|^{2}\|\mathcal{O}_{p}\|+\|\hat{X}_{0}\|\gamma_{p}\|\mathcal{T}_{p}^{0}\|.
\end{aligned}
\end{equation*}
Conditioning on the above event and the event $\|(Y_{-}Y_{-}^{*})^{-1}\|\leq \frac{8}{N\sigma_{min}(\sigma_{E}I_{mp}+8\zeta)}$ from Lemma \ref{lemma:Output PE}, we have
\begin{equation}
\begin{aligned}
&\|\mathcal{O}_{f}(A-K_{p-1}C)\cdots(A-K_{0}C)\hat{X}_{0}Y_{-}^{*}(Y_{-}Y_{-}^{*})^{-1}\|\leq\\
&\|\mathcal{O}_{f}\|\|(A-K_{p-1}C)\cdots(A-K_{0}C)\|\|\hat{X}_{0}Y_{-}^{*}\|\|(Y_{-}Y_{-}^{*})^{-1}\|\\
&\leq\frac{8\|\mathcal{O}_{f}\|\|(A-K_{p-1}C)\cdots(A-K_{0}C)\|\|\hat{X}_{0}\|^{2}\|\mathcal{O}_{p}\|}{N\sigma_{min}(\sigma_{E}I_{mp}+8\zeta)}\\
&+\frac{8\|\mathcal{O}_{f}\|\|(A-K_{p-1}C)\cdots(A-K_{0}C)\|\|\hat{X}_{0}\|\gamma_{p}\|\mathcal{T}_{p}^{0}\|}{N\sigma_{min}(\sigma_{E}I_{mp}+8\zeta)}.\label{error bound 2}
\end{aligned}
\end{equation}

Finally, we combine the two upper bounds from \eqref{error bound 1} and \eqref{error bound 2} to get the desired form.
\end{proof}

\textbf{Interpretation of Theorem \ref{thm:Bound for $G$ general}.}

\textit{Learning rate when $\|\hat{X}_{0}\|$ is zero, and the effects of trajectory length:} When $\|\hat{X}_{0}\|=0$, i.e., the initial state of the system \eqref{originalStateSpace} has zero mean, the upper bound of the error will not depend on $\epsilon_{2},\epsilon_{3}$. Noting the dependencies on $p,f$ in $\epsilon_{1}$, setting $p$ and $f$ to be small will generally result in a smaller error bound of $G$, since we are estimating a smaller $G$. However, $p,f$ should be greater than the order $n$ (and $p$ should also be large enough such that Assumption \ref{asm: controllable} is satisfied), so that Algorithm \ref{alg: SVD Algorithm} can recover the system matrices from $G$. The estimator $\hat{G}$ can achieve a learning rate of $\mathcal{O}(\frac{1}{\sqrt{N}})$. This rate is faster than the single trajectory case reported in \cite{tsiamis2019finite} in that there are no logarithmic factors, and it applies to both stable and unstable systems. This confirms the benefits of being able to collect multiple independent trajectories starting from $x_{0}\sim \mathcal{N}(0,\Sigma_{0})$.

\textit{Learning rate when $\|\hat{X}_{0}\|$ is nonzero, and the effects of trajectory length:} When $\|\hat{X}_{0}\|$ is  nonzero, the error bound will depend on $\epsilon_{2},\epsilon_{3}$. Note that $\|\hat{X}_{0}\|=\sqrt{N}\|\mu\|$ when the initial state of each trajectory has mean $\mu$. The term $\frac{\|\hat{X}_{0}\|^2\epsilon_{3}}{N\sigma_{min}(\sigma_{E}I_{mp}+8\zeta)}$ is $\mathcal{O}(1)$ when $p$ is fixed. In such case, if the system is known to be marginally stable ($\rho(A)\leq 1$), we can leverage the fact that the norm $\|(A-K_{p-1}C)\cdots(A-K_{0}C)\|$ in $\epsilon_{3}$ converges to zero exponentially fast with $p$ (see Proposition \ref{proposition:Convergence of (A-KC)} in the Appendix), by setting $p=c\log{N}$ for some sufficiently large $c$, to force the term $\|(A-K_{p-1}C)\cdots(A-K_{0}C)\|$ to go to zero no slower than $\mathcal{O}(\frac{1}{\sqrt{N}})$. The term $\mathcal{\bar{R}}_{T}$ is $\mathcal{O}(1)$ since the Kalman filter converges \cite{anderson2012optimal}. For the same reason, by fixing a small $f>n$, $\|\mathcal{T}_{f}^{p}\|$ is $\mathcal{O}(1)$. In addition, $\|\mathcal{T}_{p}^{0}\|$ and $\|\mathcal{O}_{p}\|$ are $\mathcal{O}(1)$ for stable systems, and $\mathcal{O}(p^\mathbf{d})$ for some constant $\mathbf{d}$ for marginally stable systems (see Proposition \ref{proposition:Bound Block-Toepliz Matrix} in the Appendix for $\|\mathcal{T}_{p}^{0}\|$, and \cite[Corollary~E.1]{tsiamis2019finite} for $\|\mathcal{O}_{p}\|$). As a result, the error will decrease with a rate of $\mathcal{O}(\sqrt{\frac{\log{N}}{N}})$ for strictly stable systems, and $\mathcal{O}(\frac{(\log{N})^d}{\sqrt{N}})$ for some constant $d$ for marginally stable systems, even if $\|\hat{X}_{0}\|$ is non-zero.

The next step shows that the realization error of system matrix estimates $(\hat{A},\hat{C},\hat{K}_{p-1})$ provided by Algorithm \ref{alg: SVD Algorithm} is bounded. Based on our assumption that $\mathcal{O}_{f}$ and $\mathcal{K}_{p}$ have rank $n$, the true $G$ also has rank $n$. The proof of the following theorem entirely follows \cite[Theorem ~4]{tsiamis2019finite}, with the only difference being the replacement of steady state Kalman gain $K$ by non-steady state Kalman gain $K_{p-1}$.
\begin{theorem}
[Bound on realizations of system matrices]
Let $G$ and $\hat{G}$ be defined in \eqref{True G} and \eqref{Ghat}. Let the estimates based on $\hat{G}$ using Algorithm \ref{alg: SVD Algorithm} be $\hat{\mathcal{O}}_f,\hat{\mathcal{K}}_p,\hat{A},\hat{C},\hat{K}_{p-1}$, and the corresponding matrices based on the true $G$ using Algorithm \ref{alg: SVD Algorithm} be $\tilde{\mathcal{O}}_f,\tilde{\mathcal{K}}_p,\tilde{A},\tilde{C},\tilde{K}_{p-1}$. If $G$ has rank $n$ and
$
   \|\hat{G}-G\|\leq \frac{\sigma_{n}(G)}{4},
$
then there exists an orthonormal matrix $\mathcal{T}\in\mathbb{R}^{n\times n}$ such that:
\begin{equation*}
\begin{aligned}
&\|\hat{\mathcal{O}}_f-\tilde{\mathcal{O}}_{f}\mathcal{T}\|\leq 2\sqrt{\frac{10n}{\sigma_{n}(G)}}\|\hat{G}-G\|,\\
&\|\hat{C}-\tilde{C}\mathcal{T}\|\leq \|\hat{\mathcal{O}}_f-\tilde{\mathcal{O}}_{f}\mathcal{T}\|,\\
&\|\hat{A}-\mathcal{T}^{*}\tilde{A}\mathcal{T}\|\leq \frac{\sqrt{\|G\|}+\sigma_{o}}{\sigma_{o}^{2}}\|\hat{\mathcal{O}}_f-\tilde{\mathcal{O}}_{f}\mathcal{T}\|,\\
&\|\hat{K}_{p-1}-\mathcal{T}^{*}\tilde{K}_{p-1}\|\leq 2\sqrt{\frac{10n}{\sigma_{n}(G)}}\|\hat{G}-G\|,\\
\end{aligned}
\end{equation*}
where $\sigma_{o}\triangleq min(\sigma_n(\hat{\mathcal{O}}_{f}^{u}),\sigma_n(\tilde{\mathcal{O}}_{f}^{u}))$\label{thm:Bound of realization}. Recall that the notation $\hat{\mathcal{O}}_{f}^{u},\tilde{\mathcal{O}}_{f}^{u}$ refers to the submatrix formed by the top $m(f-1)$ rows of the respective matrix. 
\end{theorem}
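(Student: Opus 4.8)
The plan is to view Algorithm \ref{alg: SVD Algorithm} applied to $\hat G$ as a perturbation of the same algorithm applied to the true rank-$n$ matrix $G$, and to carry out a standard SVD/pseudoinverse perturbation analysis; this mirrors \cite[Theorem~4]{tsiamis2019finite} line for line, the only change being that the shift-invariant factorization $G=\mathcal{O}_f\mathcal{K}_p$ uses the block structure of $\mathcal{K}_p$ built from $K_0,\dots,K_{p-1}$ (so the last block column is $K_{p-1}$) in place of the steady-state version $\mathbf{K}_p$ built from $K$. First I would localize the singular values: since $G$ has rank $n$, $\sigma_n(G)>0=\sigma_{n+1}(G)=\cdots$, so Weyl's inequality together with $\|\hat G-G\|\le\sigma_n(G)/4$ forces the top $n$ singular values of $\hat G$ into $[\tfrac34\sigma_n(G),\,\|G\|+\tfrac14\sigma_n(G)]$ and the discarded ones below $\tfrac14\sigma_n(G)$. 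In particular $\hat\Sigma_1$ is invertible with a controlled $\|\hat\Sigma_1^{-1/2}\|$, and the singular-value gap feeding Wedin's theorem is at least $\tfrac12\sigma_n(G)$.

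Next I would handle the subspace perturbation. Apply Wedin's $\sin\Theta$ theorem to $(G,\hat G)$ to bound the principal angles between $\mathrm{range}(U_1)$ and $\mathrm{range}(\hat U_1)$, and likewise for $V_1,\hat V_1$, by $\|\hat G-G\|\big/(\tfrac12\sigma_n(G))$. I would then take $\mathcal{T}$ to be the orthonormal factor in the polar decomposition of $\hat U_1^*U_1$ (which, by the structure of the common SVD, can be made to simultaneously near-align $\hat V_1$ with $V_1$), so that $\|\hat U_1-U_1\mathcal{T}\|$ and $\|\hat V_1-V_1\mathcal{T}\|$ are each bounded by the Wedin quantity up to a $\sqrt2$. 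Combining with the singular-value localization, and converting the operator-norm subspace bounds on $n$-dimensional ranges (this is where a factor of the form $\sqrt n$ and the numerical constants get consolidated into $10n$), yields $\|\hat{\mathcal{O}}_f-\tilde{\mathcal{O}}_f\mathcal{T}\|=\|\hat U_1\hat\Sigma_1^{1/2}-U_1\Sigma_1^{1/2}\mathcal{T}\|\le 2\sqrt{10n/\sigma_n(G)}\,\|\hat G-G\|$, and symmetrically $\|\hat{\mathcal{K}}_p-\mathcal{T}^*\tilde{\mathcal{K}}_p\|\le 2\sqrt{10n/\sigma_n(G)}\,\|\hat G-G\|$. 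The bounds on $\hat C$ and $\hat K_{p-1}$ then come for free: $\hat C$ is the top $m$ rows of $\hat{\mathcal{O}}_f$ and $\tilde C\mathcal{T}$ the top $m$ rows of $\tilde{\mathcal{O}}_f\mathcal{T}$, so $\|\hat C-\tilde C\mathcal{T}\|\le\|\hat{\mathcal{O}}_f-\tilde{\mathcal{O}}_f\mathcal{T}\|$ since a submatrix has spectral norm at most that of the full matrix; identically $\|\hat K_{p-1}-\mathcal{T}^*\tilde K_{p-1}\|\le\|\hat{\mathcal{K}}_p-\mathcal{T}^*\tilde{\mathcal{K}}_p\|$.

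For $\hat A$ I would use the shift-invariance identity $\tilde{\mathcal{O}}_f^u\tilde A=\tilde{\mathcal{O}}_f^l$ (inherited from the observability-matrix structure through the balancing similarity), which after rotating by the orthogonal $\mathcal{T}$ reads $(\tilde{\mathcal{O}}_f^u\mathcal{T})(\mathcal{T}^*\tilde A\mathcal{T})=\tilde{\mathcal{O}}_f^l\mathcal{T}$, so $\mathcal{T}^*\tilde A\mathcal{T}=(\tilde{\mathcal{O}}_f^u\mathcal{T})^\dagger(\tilde{\mathcal{O}}_f^l\mathcal{T})$, while $\hat A=(\hat{\mathcal{O}}_f^u)^\dagger\hat{\mathcal{O}}_f^l$. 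I would then invoke the standard perturbation bound for a full-column-rank least-squares map, controlling $\|\hat A-\mathcal{T}^*\tilde A\mathcal{T}\|$ in terms of $\|(\hat{\mathcal{O}}_f^u)^\dagger\|\le1/\sigma_o$, the norm $\|\hat A\|\le\|\hat{\mathcal{O}}_f^l\|/\sigma_o\le\|\hat{\mathcal{O}}_f\|/\sigma_o\le\sqrt{\sigma_1(\hat G)}/\sigma_o$, and the two submatrix perturbations $\|\hat{\mathcal{O}}_f^u-\tilde{\mathcal{O}}_f^u\mathcal{T}\|,\|\hat{\mathcal{O}}_f^l-\tilde{\mathcal{O}}_f^l\mathcal{T}\|$, each at most $\|\hat{\mathcal{O}}_f-\tilde{\mathcal{O}}_f\mathcal{T}\|$. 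Collecting terms and absorbing the $\tfrac14\sigma_n(G)$ from $\sigma_1(\hat G)\le\|G\|+\tfrac14\sigma_n(G)$ gives the stated factor $(\sqrt{\|G\|}+\sigma_o)/\sigma_o^2$.

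The main obstacle I anticipate is bookkeeping rather than conceptual: making a single orthonormal $\mathcal{T}$ serve simultaneously in the $\hat{\mathcal{O}}_f$, $\hat C$, $\hat A$ and $\hat{\mathcal{K}}_p$, $\hat K_{p-1}$ bounds, and tracking the numerical constants so that Wedin's inequality, the singular-value localization, and the dimension conversion combine to exactly $2\sqrt{10n/\sigma_n(G)}$. A secondary point to check is that $\sigma_o=\min(\sigma_n(\hat{\mathcal{O}}_f^u),\sigma_n(\tilde{\mathcal{O}}_f^u))>0$, i.e.\ that $\hat{\mathcal{O}}_f^u$ retains full column rank $n$; this holds because $\tilde{\mathcal{O}}_f^u$ has rank $n$ by observability and $\|\hat G-G\|\le\sigma_n(G)/4$ keeps $\hat{\mathcal{O}}_f^u$ close to $\tilde{\mathcal{O}}_f^u\mathcal{T}$. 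Everything else is the linear-algebraic argument of \cite[Theorem~4]{tsiamis2019finite} with $K$ replaced by $K_{p-1}$.
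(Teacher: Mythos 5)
Your proposal is correct and follows essentially the same route as the paper, which proves this theorem simply by invoking the balanced-realization perturbation analysis of \cite[Theorem~4]{tsiamis2019finite} (Weyl/Wedin singular-value and subspace perturbation, an aligning orthonormal $\mathcal{T}$, submatrix norm bounds for $\hat{C}$ and $\hat{K}_{p-1}$, and a pseudoinverse perturbation bound for $\hat{A}$), with the only change being that the last block column of $\mathcal{K}_{p}$ is $K_{p-1}$ rather than the steady-state gain $K$. The bookkeeping of the constants $2\sqrt{10n/\sigma_{n}(G)}$ and $(\sqrt{\|G\|}+\sigma_{o})/\sigma_{o}^{2}$ that you flag is exactly what is delegated to that cited result, so nothing further is needed.
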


\begin{Remark}
Note that the matrices $\tilde{A},\tilde{C}, \tilde{K}_{p-1}$ are equivalent to the original $A,C,K_{p-1}$ matrices up to a similarity transformation. As $p$ increases, $\|G\|=\|\mathcal{O}_{f}\mathcal{K}_{p}\|$ is $\mathcal{O}(1)$ since $f$ is fixed, and $\|\mathcal{K}_{p}\|$ is also $\mathcal{O}(1)$ (see Proposition \ref{proposition:Bound Controllability Matrix} in the Appendix). From Proposition \ref{proposition:Lower bound of least singular value of G} in the Appendix, $\sigma_{n}(G)$ is lower bounded as $p$ increases. As suggested in\cite[Remark ~3]{tsiamis2019finite}, the random term $\sigma_n(\hat{\mathcal{O}}_{f}^{u})$ in $\sigma_{o}$ can be replaced by a deterministic bound as 
\begin{equation*}
\begin{aligned}
&\sigma_n(\hat{\mathcal{O}}_{f}^{u})\geq\sigma_n(\tilde{\mathcal{O}}_{f}^{u})-\|\hat{\mathcal{O}}_f-\tilde{\mathcal{O}}_{f}\mathcal{T}\|.
\end{aligned}
\end{equation*}
Hence $\sigma_{o}$ will be lower bounded by $\frac{\sigma_n(\tilde{\mathcal{O}}_{f}^{u})}{2}>0$ when the error $\|\hat{\mathcal{O}}_f-\tilde{\mathcal{O}}_{f}\mathcal{T}\|$ is small enough, where the inequality is due to the fact that we assumed the system is observable. Consequently, the term $\frac{\sqrt{\|G\|}+\sigma_{o}}{\sigma_{o}^{2}}$ is always $\mathcal{O}(1)$.

As a result, all estimation errors of system matrices depend linearly on $\|\hat{G}-G\|$, even if $p$ is increasing. Hence, the realization error will decrease at least as fast as $\mathcal{O}(\frac{1}{\sqrt{N}})$ when $\|\hat{X}_{0}\|=0$, and $p$ is fixed. When $\|\hat{X}_{0}\|$ is non-zero, the error can decrease at a rate of $\mathcal{O}(\sqrt{\frac{\log{N}}{N}})$ for strictly stable systems, and at a rate of  $\mathcal{O}(\frac{(\log{N})^d}{\sqrt{N}})$ for some constant $d$ for marginally stable systems by setting $p=c\log{N}$ for some positive constant $c$.  Note that as $p$ goes to infinity, the matrix $\hat{K}_{p-1}$ estimates the steady state Kalman gain $\mathcal{T}^{*}\tilde{K}$. 

On the other hand, the dependencies on $\sigma_{n}(G)$ and $\sigma_n(\hat{\mathcal{O}}_{f}^{u})$ also show that the estimation error of system matrices depends on the “normalized estimation error” of $G$. Consequently, although our bound suggests that setting $p,f$ to be small could potentially reduce the estimation error of $G$ (when $\mu=0$), it may not necessarily reduce the error of the system matrices. A similar issue also appears in the recovery of system matrices from Markov parameters \cite{zheng2020non}. It is of interest to study how trajectory length directly affects the realization error in future work.
\end{Remark}

%\section{Numerical Experiments}
%We provide some numerical experiments to validate the performance of using least squares algorithm to estimate $G$ of system \eqref{originalStateSpace}. The experiments are performed using the system matrices
%\begin{align*}
%A=
%\begin{bmatrix}
%0.95&0.8&0.7\\
%0&0.8&0.7\\
%0&0&0.7\\
%\end{bmatrix},
%\indent 
%C=
%\begin{bmatrix}
%1&0&0\\
%0&0&1\\
%\end{bmatrix}.
%\end{align*}
%The initial covariance matrix is set to be
%\begin{align*}
%\Sigma_{0}=
%\begin{bmatrix}
%60&-42&-10\\
%-42&40&13\\
%-10&13&24\\
%\end{bmatrix},
%\end{align*}
%and both $Q$ and $R$ are identity matrices. The number of trajectories is set to %$N=5000$, and we plot the estimation error $\|\hat{G}-G\|$ times $\sqrt{N}$ against $N$ %averaged over 10 independent runs. We test the effects of different spectral radius, %i.e., the first entry of the $A$ matrix is set to $0.95,1,1.05$, corresponding to %stable, marginally stable and unstable systems, with $T=8,p=f=4$. The mean of initial %states is set to $\mu=0$. As expected, the estimation error $\|\hat{G}-G\|$ times %$\sqrt{N}$ is bounded for all cases.

%\begin{figure}
%\minipage[t]{0.4\textwidth}
%    \includegraphics[width=\linewidth]{Change_stability.png}
%    \caption{$\|\hat{G}-G\|*\sqrt{N_{r}}$ for different spectral radius}
%    \label{fig: numerical experiment}
%\endminipage \hfill
%\end{figure}

\section{Conclusion and future work} \label{sec: conclusion}
In this paper, we performed finite sample analysis of learning the dynamics of autonomous systems using multiple trajectories. Our results rely neither on controlled inputs, nor on observations of steady state behaviors of the system. We proved a learning rate that is consistent with \cite{zheng2020non} and \cite{tsiamis2019finite} (up to logarithmic factors). Future work could focus on understanding how to effectively utilize multiple trajectories of varying lengths, and how other variants of the balanced realization algorithm affect the error.

%%%%%%%%%%%%%%%%%%%%%%%%%%%%%%%%%%%%%%%%%%%%%%%%%%%%%%%%%%%%

%%% BEGIN INSTRUCTIONS %%%

%%%%%%%%%%%%%%%%%%%%%%%%%%%%%%%%%%%%%%%%%%%%%%%%%%%%%%%%%%%%
\appendix

\section{Appendix}

\subsection{Auxiliary results}
Some of the results here are standard, and we include them for completeness.

\begin{lemma}(\cite[Lemma~E.2]{tsiamis2019finite}). \label{lemma:Bound of power of A}
Consider the series $S_{t}=\sum_{i=0}^{t}\|A^{i}\|$. If the matrix $A$ is strictly stable ($\rho(A)<1$), then $S_{t}=\mathcal{O}(1)$;
if the matrix $A$ is marginally stable ($\rho(A)=1$), then $S_{t}=\mathcal{O}(t^\mathbf{d})$, where $\mathbf{d}$ is the largest Jordan block of $A$ corresponding to a unit circle eigenvalue $\|\lambda\|=1$.

\end{lemma}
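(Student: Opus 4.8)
The plan is to diagonalize as much as possible via the Jordan canonical form. Write $A = VJV^{-1}$ with $J = \diag(J_{1},\dots,J_{r})$ block diagonal, where each $J_{k} = \lambda_{k} I_{s_{k}} + N_{k}$ is a Jordan block of size $s_{k}$ with nilpotent part $N_{k}$ satisfying $\|N_{k}\|\leq 1$ and $N_{k}^{s_{k}}=0$. Then $\|A^{i}\| \leq \|V\|\,\|V^{-1}\|\,\|J^{i}\|$, and since $J$ is block diagonal, $\|J^{i}\| = \max_{k}\|J_{k}^{i}\|$, so it suffices to bound each $\|J_{k}^{i}\|$. Because $\lambda_{k}I$ and $N_{k}$ commute, the binomial theorem gives $J_{k}^{i} = \sum_{l=0}^{\min(i,\,s_{k}-1)}\binom{i}{l}\lambda_{k}^{\,i-l}N_{k}^{l}$, hence
\[
\|J_{k}^{i}\| \leq \sum_{l=0}^{s_{k}-1}\binom{i}{l}\,|\lambda_{k}|^{\,i-l}.
\]
This reduces the whole problem to estimating sums of terms of the form $\binom{i}{l}|\lambda_{k}|^{\,i-l}$ with $l \leq s_{k}-1$.

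For the strictly stable case ($\rho(A)<1$), I would fix any $r$ with $\rho(A) < r < 1$. For each $l$ we have $\binom{i}{l} \leq i^{l}$, and $i^{l}|\lambda_{k}|^{\,i-l}$ is a polynomial times a geometric term of ratio $|\lambda_{k}|/r < 1$ relative to $r^{i}$; hence there is a constant $c_{l}$ with $\binom{i}{l}|\lambda_{k}|^{\,i-l} \leq c_{l}\,r^{i}$ for \emph{all} $i\geq 0$. Summing over the finitely many $l$ and $k$ yields $\|A^{i}\| \leq C\,r^{i}$ for a constant $C$, and therefore $S_{t} \leq C\sum_{i=0}^{t} r^{i} \leq C/(1-r) = \mathcal{O}(1)$.

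For the marginally stable case ($\rho(A)=1$), I would split the Jordan blocks into those with $|\lambda_{k}|<1$ and those with $|\lambda_{k}|=1$. The blocks with $|\lambda_{k}|<1$ contribute, exactly as above, a term bounded by a constant times $r^{i}$ for some $r<1$, whose sum over $i$ is $\mathcal{O}(1)$. For a block with $|\lambda_{k}|=1$ we have $\binom{i}{l}|\lambda_{k}|^{\,i-l} = \binom{i}{l} \leq i^{l}/l!$, and since $l \leq s_{k}-1 \leq \mathbf{d}-1$ (where $\mathbf{d}$ is the maximal size of a Jordan block attached to a unit-modulus eigenvalue), this is $\mathcal{O}(i^{\mathbf{d}-1})$. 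Combining, $\|A^{i}\| = \mathcal{O}(i^{\mathbf{d}-1})$, and hence $S_{t} = \sum_{i=0}^{t}\mathcal{O}(i^{\mathbf{d}-1}) = \mathcal{O}(t^{\mathbf{d}})$, using $\sum_{i=0}^{t} i^{\mathbf{d}-1} = \mathcal{O}(t^{\mathbf{d}})$.

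The only delicate point is making the ``polynomial dominated by geometric'' estimate uniform in $i$ (so the bound holds for every $i\geq 0$, not merely asymptotically) and correctly tracking the exponent $\mathbf{d}$ through the off-by-one between a Jordan block's size and the highest surviving power of its nilpotent part; both are routine once the binomial expansion above is in hand. (If one prefers to avoid the Jordan form, the same estimates follow from the Cauchy integral representation of $A^{i}$ together with resolvent bounds near the spectrum, but the Jordan-form argument is the most elementary.)
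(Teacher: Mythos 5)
Your argument is correct. Note that the paper does not prove this lemma at all—it is imported by citation from \cite[Lemma~E.2]{tsiamis2019finite} as a standard auxiliary fact—and your Jordan-form expansion $J_{k}^{i}=\sum_{l}\binom{i}{l}\lambda_{k}^{i-l}N_{k}^{l}$, with geometric domination for $|\lambda_{k}|<\rho(A)<r<1$ in the strictly stable case and the bound $\binom{i}{l}=\mathcal{O}(i^{\mathbf{d}-1})$ for unit-modulus blocks followed by summation to $\mathcal{O}(t^{\mathbf{d}})$, is exactly the standard route taken in that reference; the only cosmetic caveat (not a gap) is the degenerate case $\lambda_{k}=0$ in the stable branch, where the terms vanish for $i>l$ anyway.
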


%\begin{proposition}
%The norm $\|\mathbf{K}_{p}\|$ is $\mathcal{O}(1)$ as $p$ increases. \label{proposition:Bound controllability ss}
%\begin{proof}
%From the Kalman filtering theory, the matrix %$(A-KC)$ is strictly stable. Directly applying %Lemma \ref{lemma:Bound of power of A}, we have
%\begin{equation}
%\begin{aligned}
%&\|\mathbf{K}_{p}\|\leq\sum_{i=0}^{p-1}\|(A-KC)^%{i}\|\|K\|=\mathcal{O}(1). \\
%\end{aligned}
%\end{equation}
%\end{proof}
%\end{proposition}

\begin{proposition}
The norm $\|\mathcal{T}_{p}^{0}\|$ is $\mathcal{O}(p^{\mathbf{d}})$ with $p$ for some constant $\mathbf{d}$ when the system matrix $A$ is marginally stable, and is $\mathcal{O}(1)$ when the system matrix $A$ is strictly stable. \label{proposition:Bound Block-Toepliz Matrix}
\end{proposition}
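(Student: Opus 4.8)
The plan is to exploit the block sub-diagonal structure of $\mathcal{T}_p^0$: with $a=0$ its $(i,j)$ block equals $CA^{i-1-j}K_{j-1}$ when $i>j$, equals $I_m$ when $i=j$, and vanishes when $i<j$. I would therefore decompose $\mathcal{T}_p^0 = \sum_{l=0}^{p-1} N_l$, where $N_0 = I_{mp}$ and, for $l\ge 1$, $N_l$ is the matrix whose only nonzero blocks are $CA^{l-1}K_{j-1}$ in block position $(j+l,\,j)$ for $j=1,\dots,p-l$. Since every block-row and every block-column of $N_l$ contains at most one nonzero block, a block row/column permutation turns $N_l$ into a block-diagonal matrix, so $\|N_l\| = \max_{1\le j\le p-l}\|CA^{l-1}K_{j-1}\| \le \|C\|\,\|A^{l-1}\|\,\bar K$, where $\bar K \triangleq \sup_{j\ge 0}\|K_j\|$.

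The second step is to argue that $\bar K<\infty$. Under Assumption \ref{asm: standard}, $(A,C)$ is observable (hence detectable) and $(A,Q^{1/2})$ is controllable (hence stabilizable), so the Riccati recursion for $P_k$ with $P_0=\Sigma_0\succeq 0$ converges to the stabilizing solution $P$ \cite{anderson2012optimal}; being convergent, $\{P_k\}$ is bounded. Together with $\|(CP_kC^*+R)^{-1}\|\le\|R^{-1}\|=1/\sigma_{min}(R)$ (which holds because $CP_kC^*\succeq0$ and $R$ is positive definite by Assumption \ref{asm: standard}), this yields $\|K_k\| = \|AP_kC^*(CP_kC^*+R)^{-1}\| \le \|A\|\,\|C\|\,\big(\sup_k\|P_k\|\big)/\sigma_{min}(R)$, a bound independent of $k$.

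Combining the two steps via the triangle inequality gives
\begin{equation*}
\|\mathcal{T}_p^0\| \;\le\; \|N_0\| + \sum_{l=1}^{p-1}\|N_l\| \;\le\; 1 + \|C\|\,\bar K\sum_{l=1}^{p-1}\|A^{l-1}\| \;=\; 1 + \|C\|\,\bar K\, S_{p-2},
\end{equation*}
with $S_t=\sum_{i=0}^{t}\|A^i\|$ as in Lemma \ref{lemma:Bound of power of A}. That lemma then finishes the proof: $S_{p-2}=\mathcal{O}(1)$ when $\rho(A)<1$, giving $\|\mathcal{T}_p^0\|=\mathcal{O}(1)$; and $S_{p-2}=\mathcal{O}(p^{\mathbf{d}})$ when $\rho(A)=1$, where $\mathbf{d}$ is the size of the largest Jordan block of $A$ associated with a unit-modulus eigenvalue, giving $\|\mathcal{T}_p^0\|=\mathcal{O}(p^{\mathbf{d}})$.

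I expect the only delicate point to be the uniform boundedness of $\{K_j\}$: the estimate $\|(CP_jC^*+R)^{-1}\|\le 1/\sigma_{min}(R)$ is immediate, but the boundedness of $\{P_j\}$ rests on convergence of the Riccati iteration from the arbitrary PSD initialization $P_0=\Sigma_0$, which is where Assumption \ref{asm: standard} enters. Everything else is routine bookkeeping about the sub-diagonal structure of $\mathcal{T}_p^0$, and the case split is handled entirely by Lemma \ref{lemma:Bound of power of A}.
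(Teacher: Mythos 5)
Your proposal is correct and follows essentially the same route as the paper's proof: decompose $\mathcal{T}_p^0$ into its block sub-diagonals, bound the $l$-th one by $\|C\|\,\|A^{l-1}\|$ times a uniform bound on the Kalman gains, and invoke Lemma \ref{lemma:Bound of power of A} for $\sum_i\|A^i\|$ to get the two cases. The only difference is cosmetic: you justify boundedness of the gains explicitly via convergence of the Riccati iteration and $\|(CP_kC^*+R)^{-1}\|\le 1/\sigma_{min}(R)$, while the paper simply cites convergence of $K_t$ to its steady-state value under Assumption \ref{asm: standard} and works with $K_{max}(p)=\max_{t\in 0,\dots,p-2}\|K_t\|$.
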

\begin{proof}
Letting $K_{max}(p)=\max_{t\in0,\dots,p-2}\|K_{t}\|$, where $K_{t}$ is defined in \eqref{gain}. We have
\begin{equation*}
\begin{aligned}
&\|\mathcal{T}_{p}^{0}\|\leq\|I_{mp}\|+\|C\|K_{max}(p)+\\
&\|C\|\|A\|K_{max}(p)+\cdots+\|C\|\|A^{p-2}\|K_{max}(p)\\
&=1+\|C\|K_{max}(p)\sum_{i=0}^{p-2}\|A^{i}\|.
\end{aligned}
\end{equation*}

From Kalman filtering theory, the Kalman gain $K_{t}$ converges to its steady state $K$ under Assumption \ref{asm: standard}. Hence $K_{max}(p)=\mathcal{O}(1)$. From Lemma \ref{lemma:Bound of power of A}, we have the above sum is $\mathcal{O}(1)$ if $A$ is strictly stable, and $\mathcal{O}(p^\mathbf{d})$ when $A$ is marginally stable.
\end{proof}

\begin{lemma}(\cite[Theorem~6.6]{hartfiel2002nonhomogeneous}).\label{lemma:Pertubation Bound} Let $U+A_{1},U+A_{2},\cdots$ be a sequence of $n \times n$ matrices. Given $\epsilon>0$, there is a $\delta(\epsilon)$ such that if $\|A_{k}\|\leq\delta(\epsilon)$ for all $k$, then   
\begin{align*}   
\|(U+A_{k})\cdots(U+A_{1})\|\leq \sigma(\rho(U)+\epsilon)^{k}
\end{align*}
for some constant $\sigma$.
\end{lemma}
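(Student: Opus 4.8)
\smallskip
\noindent\textbf{Proof plan.}
The plan is to change coordinates so that $U$ acts with operator norm arbitrarily close to its spectral radius, and then to control the perturbed product by a factor-by-factor triangle inequality followed by a telescoping product. Concretely, I would first produce an invertible $S$ (depending on $U$ and $\epsilon$) with $\|S^{-1}US\|\le\rho(U)+\tfrac{\epsilon}{2}$, then pick the threshold $\delta(\epsilon)$ small enough that every conjugated perturbed factor $S^{-1}(U+A_k)S$ has norm at most $\rho(U)+\epsilon$; the bound on the product then falls out, with the constant $\sigma$ being the condition number of $S$.

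The first step is the similarity reduction, which I expect to be the only nontrivial ingredient. Put $U$ into Jordan form $U=PJP^{-1}$ (working over $\mathbb{C}$, or using the real Jordan form since $U$ is real and spectral norms are unchanged), where $J$ is block diagonal with blocks $\lambda I+N$, $N$ the nilpotent superdiagonal shift and $|\lambda|\le\rho(U)$. Conjugating such a block by $\diag(1,\eta,\eta^{2},\dots)$ rescales $N$ to $\eta N$, so the spectral norm of that rescaled block is at most $|\lambda|+\eta\|N\|\le\rho(U)+\tfrac{\epsilon}{2}$ once $\eta$ is taken small enough. Collecting these per-block rescalings into one block-diagonal matrix $D$ and setting $S\triangleq PD$ gives $\|S^{-1}US\|=\|D^{-1}JD\|\le\rho(U)+\tfrac{\epsilon}{2}$, using that the spectral norm of a block-diagonal matrix is the maximum of the blocks' norms. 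Set $\kappa\triangleq\|S\|\,\|S^{-1}\|$.

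The rest is routine bookkeeping. Take $\delta(\epsilon)\triangleq\dfrac{\epsilon}{2\kappa}$. If $\|A_k\|\le\delta(\epsilon)$ for every $k$, then
\[
\|S^{-1}(U+A_k)S\|\le\|S^{-1}US\|+\|S^{-1}\|\,\|A_k\|\,\|S\|\le\rho(U)+\tfrac{\epsilon}{2}+\kappa\,\delta(\epsilon)=\rho(U)+\epsilon .
\]
Inserting $SS^{-1}$ between consecutive factors gives
\[
S^{-1}(U+A_k)\cdots(U+A_1)S=\prod_{j=k}^{1}S^{-1}(U+A_j)S ,
\]
so by submultiplicativity $\|S^{-1}(U+A_k)\cdots(U+A_1)S\|\le(\rho(U)+\epsilon)^{k}$, and multiplying by $S$ on the left and $S^{-1}$ on the right yields $\|(U+A_k)\cdots(U+A_1)\|\le\kappa\,(\rho(U)+\epsilon)^{k}$, which is the claim with $\sigma=\kappa=\|S\|\,\|S^{-1}\|$. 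I would close by remarking that $\delta(\epsilon)$ and $\sigma$ necessarily depend on the conditioning of $S$, hence on $U$ and on $\epsilon$, exactly as the statement allows; the whole difficulty is concentrated in the Jordan-plus-scaling reduction of the first step, while the perturbation estimate itself is elementary.
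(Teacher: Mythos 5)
The paper does not prove this lemma at all: it is imported verbatim from Hartfiel's book (Theorem~6.6 of the cited reference), so there is no in-paper argument to compare against. Your blind proof is correct and self-contained, and it is the standard route to such statements: the Jordan-form-plus-diagonal-scaling step is exactly the classical fact that for every $\epsilon>0$ there is a similarity $S$ (equivalently, an adapted operator norm) with $\|S^{-1}US\|\le\rho(U)+\tfrac{\epsilon}{2}$, after which the choice $\delta(\epsilon)=\tfrac{\epsilon}{2\kappa}$, the telescoping insertion of $SS^{-1}$, and submultiplicativity give $\|(U+A_{k})\cdots(U+A_{1})\|\le\kappa\,(\rho(U)+\epsilon)^{k}$ with $\sigma=\kappa=\|S\|\,\|S^{-1}\|$. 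The constants you produce depend only on $U$ and $\epsilon$, not on $k$ or on the particular perturbation sequence, which is precisely the uniformity the lemma asserts and the form in which it is used later (Propositions~\ref{proposition:Convergence of (A-KC)} and \ref{proposition:Bound Controllability Matrix}); the passage through the complex Jordan form is harmless since the spectral norm of a real matrix is unchanged when it is viewed as a complex matrix, as you note. In short, your proposal supplies a correct elementary proof of a result the paper only cites.
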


\begin{proposition} \label{proposition:Convergence of (A-KC)}
For any fixed integer $k$, where $p-1\geq k\geq 0$, we have 
\begin{align*}
\|(A-K_{p-1}C)\cdots(A-K_{k}C)\|=\mathcal{O}(e^{-c_{0}p}),
\end{align*}
for some positive constant $c_{0}$.
\end{proposition}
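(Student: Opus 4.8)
The plan is to express each factor $A - K_t C$ as a vanishing perturbation of the \emph{steady-state} closed-loop matrix $A - KC$ and then invoke the nonhomogeneous product bound of Lemma~\ref{lemma:Pertubation Bound}. Concretely, write
\begin{equation*}
A - K_t C = (A - KC) + (K - K_t)C,
\end{equation*}
and set $U \triangleq A - KC$, $B_t \triangleq (K - K_t)C$, so that $A - K_t C = U + B_t$. By Kalman filtering theory (as already used in the proof of Proposition~\ref{proposition:Bound Block-Toepliz Matrix}), $K_t \to K$ under Assumption~\ref{asm: standard}, hence $\|B_t\| \leq \|C\|\,\|K - K_t\| \to 0$ as $t \to \infty$; and $\rho(U) = \rho(A - KC) < 1$ \cite{anderson2012optimal}.

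First I would fix $\epsilon > 0$ small enough that $\rho(U) + \epsilon < 1$, and let $\delta(\epsilon)$ be the constant supplied by Lemma~\ref{lemma:Pertubation Bound}. Since $\|B_t\| \to 0$, there is an index $T_0$ depending only on $\epsilon$ (not on $k$ or $p$) with $\|B_t\| \leq \delta(\epsilon)$ for all $t \geq T_0$. For $p$ large enough that $p - 1 \geq \max\{k, T_0\}$, I would split the product at $T_0$:
\begin{equation*}
(A - K_{p-1}C)\cdots(A - K_k C) = \underbrace{(A - K_{p-1}C)\cdots(A - K_{T_0}C)}_{=:\,P_1}\;\underbrace{(A - K_{T_0 - 1}C)\cdots(A - K_k C)}_{=:\,P_2},
\end{equation*}
with the convention $P_2 = I$ when $k \geq T_0$. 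The factor $P_2$ is a product of at most $T_0 - k$ matrices, each of norm bounded by $\|A\| + \|C\|\sup_t\|K_t\| = \mathcal{O}(1)$ (the supremum is finite because $K_t$ converges); thus $\|P_2\|$ is a constant depending only on $k$ and $T_0$.

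Next I would apply Lemma~\ref{lemma:Pertubation Bound} to $P_1$: relabelling its $p - T_0$ factors as $U + B'_1, \ldots, U + B'_{p - T_0}$ (so $B'_j = B_{T_0 - 1 + j}$, each with $\|B'_j\| \leq \delta(\epsilon)$), the lemma gives $\|P_1\| \leq \sigma(\rho(U) + \epsilon)^{p - T_0}$ for some constant $\sigma$. Setting $c_0 \triangleq -\log(\rho(U) + \epsilon) > 0$ turns this into $\sigma(\rho(U)+\epsilon)^{-T_0}\, e^{-c_0 p}$, and multiplying by $\|P_2\|$ yields $\|(A - K_{p-1}C)\cdots(A - K_k C)\| = \mathcal{O}(e^{-c_0 p})$ as desired; note $c_0$ is independent of $k$, while the implied constant depends on $k$ and $T_0$. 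The only point requiring care is that the "bad" initial block $P_2$ contains a \emph{fixed}, $p$-independent number of bounded factors, so it contributes only a constant and cannot destroy the exponential decay of $P_1$; everything else is index bookkeeping for Lemma~\ref{lemma:Pertubation Bound}.
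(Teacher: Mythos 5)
Your proposal is correct and follows essentially the same route as the paper's proof: write each factor as the steady-state closed-loop matrix $A-KC$ plus a vanishing perturbation, split the product into a fixed-size bounded head and a length-$\mathcal{O}(p)$ tail of small perturbations, and apply Lemma~\ref{lemma:Pertubation Bound} to the tail. The only difference is cosmetic bookkeeping (your threshold $T_0$ versus the paper's $k+t(\epsilon)$), and your handling of the constant and of $c_0$ is, if anything, slightly more explicit.
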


\begin{proof}
From Kalman filtering theory, the Kalman gain $K_{t}$ converges to its steady state $K$, and the matrix $A-KC$ has spectral radius less than 1 under Assumption \ref{asm: standard}. Hence, we can write $A-K_{t}C=A-KC+\eta_{t}$ for $t\geq0$, where $\|\eta_{t}\|$ converges to $0$. Pick $\epsilon$ such that $\epsilon+\rho(A-KC)< 1$. To apply Lemma \ref{lemma:Pertubation Bound}, let $U=A-KC$, and let $k+t(\epsilon)$ be the smallest index such that $\|\eta_{t}\|\leq \delta(\epsilon)$ for all $t\geq k+t(\epsilon)$. Letting $p\geq k+t(\epsilon)+1$, we have
\begin{equation*}
\begin{aligned}
&\|(A-K_{p-1}C)\cdots(A-K_{k}C)\|\leq\\
&\|\prod_{t=1}^{p-k-t(\epsilon)}(U+\eta_{p-t})\|\|\prod_{t=p-k-t(\epsilon)+1}^{p-k}(U+\eta_{p-t})\|\\
&\leq \sigma(\rho(U)+\epsilon)^{p-k-t(\epsilon)}\|\prod_{t=p-k-t(\epsilon)+1}^{p-k}(U+\eta_{p-t})\|\\
&=\mathcal{O}(e^{-c_{0}p}),
\end{aligned}
\end{equation*}
where the second inequality comes from Lemma \ref{lemma:Pertubation Bound}.
\end{proof}

\begin{proposition}
The norm $\|\mathcal{K}_{p}\|$ is $\mathcal{O}(1)$ with $p$. \label{proposition:Bound Controllability Matrix}
\end{proposition}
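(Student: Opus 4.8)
The plan is to control $\|\mathcal{K}_p\|$ by the square root of the sum of the squared spectral norms of its $p$ block columns, and then to show that this sum stays bounded as $p\to\infty$ because each block column is a contracting matrix product times a bounded Kalman gain. Concretely, the block columns of $\mathcal{K}_p$ are $(A-K_{p-1}C)\cdots(A-K_{j+1}C)K_j$ for $j=0,\dots,p-1$ (with the $j=p-1$ term equal to $K_{p-1}$ and the matrix product read as the identity there), so from the elementary estimate $\|[\,B_0\ \cdots\ B_{p-1}\,]\|^2=\|\sum_j B_jB_j^*\|\le\sum_j\|B_j\|^2$ we get
\[
\|\mathcal{K}_p\|^2\le\sum_{j=0}^{p-1}\|(A-K_{p-1}C)\cdots(A-K_{j+1}C)\|^2\,\|K_j\|^2 .
\]

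Next I would establish a \emph{uniform} geometric bound $\|(A-K_{p-1}C)\cdots(A-K_{j+1}C)\|\le C_1\beta^{\,p-1-j}$ valid for all $0\le j\le p-1$, with a single constant $C_1$ and a single $\beta\in(0,1)$. This is morally Proposition \ref{proposition:Convergence of (A-KC)}, but I need the prefactor to be independent of the starting index $j$, which is available from that proposition's proof: write $A-K_tC=(A-KC)+\eta_t$ with $\|\eta_t\|\to 0$, choose $\epsilon$ with $\beta:=\rho(A-KC)+\epsilon<1$, and split any such product into a transient head of at most $t(\epsilon)$ factors — each bounded by $M:=\sup_t\|A-K_tC\|<\infty$, finite since $K_t\to K$ — and a steady tail to which Lemma \ref{lemma:Pertubation Bound} applies, giving a prefactor $C_1=\sigma M^{t(\epsilon)}\beta^{-t(\epsilon)}$ that does not depend on $j$.

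Finally, since the Kalman gains converge, $\|K_j\|\le K_{\max}:=\sup_t\|K_t\|<\infty$ uniformly in $j$, so combining the two displays yields
\[
\|\mathcal{K}_p\|^2\le C_1^2K_{\max}^2\sum_{j=0}^{p-1}\beta^{\,2(p-1-j)}=C_1^2K_{\max}^2\sum_{i=0}^{p-1}\beta^{2i}\le\frac{C_1^2K_{\max}^2}{1-\beta^2},
\]
a constant independent of $p$, which gives $\|\mathcal{K}_p\|=\mathcal{O}(1)$.

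I expect the only real obstacle to be the second step — ensuring the exponential decay estimate on the truncated products holds with a constant uniform in $j$, rather than with the a priori $j$-dependent hidden constant of Proposition \ref{proposition:Convergence of (A-KC)}; isolating the finitely many transient factors and bounding their product by $M^{t(\epsilon)}$ is exactly what makes the argument go through. Everything after that is a routine geometric-series estimate.
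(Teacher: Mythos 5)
Your proof is correct and rests on the same ingredients as the paper's: a block-column bound on $\|\mathcal{K}_p\|$, the perturbation bound of Lemma \ref{lemma:Pertubation Bound} applied to $A-K_tC=(A-KC)+\eta_t$, boundedness of the Kalman gains, and a geometric series. The only organizational difference is that you make the exponential decay of each column's matrix product uniform in the starting index by absorbing the finitely many transient factors into the prefactor $\sigma M^{t(\epsilon)}\beta^{-t(\epsilon)}$, whereas the paper avoids the uniformity issue by splitting the column sum into the finitely many transient-containing columns (each $\mathcal{O}(e^{-c_0p})$ by Proposition \ref{proposition:Convergence of (A-KC)}) and the remaining columns, which are summed geometrically via Lemma \ref{lemma:Pertubation Bound} directly; both routes yield the same $\mathcal{O}(1)$ conclusion.
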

\begin{proof}
From Kalman filtering theory, the Kalman gain $K_{p-1}$ converges to its steady state $K$, and the matrix $A-KC$ has spectral radius less than 1 under Assumption \ref{asm: standard}. Hence for any $t\geq1$ we can write $A-K_{t}C=A-KC+\eta_{t}$, where $\|\eta_{t}\|$ converges to $0$. Let $K_{max}(p)=\max_{t\in0,\dots,p-1}\|K_{t}\|$. We have $K_{max}(p)=\mathcal{O}(1)$ since $K_{t}$ converges to $K$. Pick $\epsilon$ such that $\epsilon+\rho(A-KC)< 1$. To apply Lemma \ref{lemma:Pertubation Bound}, let $U=A-KC$, and let $t(\epsilon)$ be the smallest index such that $\|\eta_{t}\|\leq \delta(\epsilon)$ for all $t\geq t(\epsilon)$. Letting $p\geq t(\epsilon)+1$, we have
\begin{equation*} 
\begin{aligned}
&\|\mathcal{K}_{p}\|\leq K_{max}(p)+K_{max}(p)\sum_{t=2}^{p}\|\prod_{j=2}^{t} (U+\eta_{p-j+1})\|\\
&= K_{max}(p)+K_{max}(p)\sum_{t=p-t(\epsilon)+2}^{p}\|\prod_{j=2}^{t} (U+\eta_{p-j+1})\|\\
&+K_{max}(p)\sum_{t=2}^{p-t(\epsilon)+1}\|\prod_{j=2}^{t} (U+\eta_{p-j+1})\|.
\end{aligned}
\end{equation*}

From Proposition \ref{proposition:Convergence of (A-KC)}, we have
\begin{equation} \label{controllability bound 2}
\begin{aligned}
&K_{max}(p)\sum_{t=p-t(\epsilon)+2}^{p}\|\prod_{j=2}^{t} (U+\eta_{p-j+1})\|=\mathcal{O}(e^{-c_{0}p}).
\end{aligned}
\end{equation}

From Lemma \ref{lemma:Pertubation Bound}, we have
\begin{equation} \label{controllability bound 3}
\begin{aligned}
&K_{max}(p)\sum_{t=2}^{p-t(\epsilon)+1}\|\prod_{j=2}^{t} (U+\eta_{p-j+1})\|\leq\\ &K_{max}(p)\sum_{t=2}^{p-t(\epsilon)+1}\sigma(\rho(U)+\epsilon)^{t-1}\leq \frac{K_{max}(p)\sigma}{1-\rho(U)-\epsilon}=\mathcal{O}(1),
\end{aligned}
\end{equation}
where the second inequality comes from geometric series.

Finally, combining \eqref{controllability bound 2} and \eqref{controllability bound 3}, we obtain $\|\mathcal{K}_{p}\|=\mathcal{O}(1)$.

\end{proof}

\begin{proposition}\label{proposition:Lower bound of least singular value of G}
Assume that $\rank(\mathcal{O}_{f})=\rank(\mathbf{K}_{p})=n$, where $n$ is the order of the system. Fix any positive integer $k, n\leq k< p$. Let $\mathbf{K}_{ss}$ be the matrix formed by the last $k$ block columns of the reversed extended controllability matrix $\mathbf{K}_{p}$. For sufficiently large $p$, we have the following inequalities:
\begin{align*}
&\sigma_{n}(G)\geq\frac{\sigma_{n}(\mathcal{O}_{f}\mathbf{K_{ss}})}{2}>0,\\
&\sigma_{n}(\mathcal{K}_{p})\geq\frac{\sigma_{n}(\mathbf{K_{ss}})}{2}>0.
\end{align*}
\end{proposition}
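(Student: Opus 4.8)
The plan is to relate $\mathcal{K}_p$ (built from the \emph{time-varying} Kalman gains $K_0,\dots,K_{p-1}$) and $G = \mathcal{O}_f\mathcal{K}_p$ to their steady-state counterparts $\mathbf{K}_p$ and $\mathcal{O}_f\mathbf{K}_p$, exploiting the fact that $K_t \to K$ and that $A - KC$ is strictly stable. First I would decompose $\mathcal{K}_p$ (and similarly $\mathbf{K}_p$) column-block by column-block, writing each block of $\mathcal{K}_p$ as the corresponding block of $\mathbf{K}_p$ plus an error term. The key observation is that the blocks of $\mathcal{K}_p$ corresponding to early time indices (the "old" part of the trajectory) are of the form $(A-K_{p-1}C)\cdots(A-K_{j}C)K_{j-1}$, which by Proposition \ref{proposition:Convergence of (A-KC)} decays like $\mathcal{O}(e^{-c_0 p})$; the same holds for the corresponding blocks of $\mathbf{K}_p$ since $A-KC$ is strictly stable. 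Hence, up to an $\mathcal{O}(e^{-c_0 p})$ perturbation, both $\mathcal{K}_p$ and $\mathbf{K}_p$ agree with the matrix formed by their \emph{last $k$ block columns}, and within those last $k$ columns the time-varying products $(A-K_{p-1}C)\cdots(A-K_{p-k+1}C)$ differ from the steady-state products $(A-KC)^{k-1},\dots,(A-KC),I$ by a term that also vanishes as $p\to\infty$ (because $K_t \to K$ and $k$ is fixed, so all the gains appearing in these last $k$ blocks have indices $\geq p-k$, which are eventually within $\delta(\epsilon)$ of $K$).

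Concretely, let $\mathcal{K}_{ss}$ denote the submatrix of $\mathcal{K}_p$ formed by its last $k$ block columns. I would show $\|\mathcal{K}_p - [\,0 \;\; \mathcal{K}_{ss}\,]\| = \mathcal{O}(e^{-c_0 p})$ and $\|\mathbf{K}_p - [\,0 \;\; \mathbf{K}_{ss}\,]\| = \mathcal{O}(e^{-c_0 p})$ using Proposition \ref{proposition:Convergence of (A-KC)}, and then $\|\mathcal{K}_{ss} - \mathbf{K}_{ss}\| \to 0$ as $p \to \infty$ using convergence of $K_t$ together with Lemma \ref{lemma:Pertubation Bound} to control the products. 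Combining, $\|\mathcal{K}_p - [\,0\;\;\mathbf{K}_{ss}\,]\| \to 0$, so by Weyl's inequality for singular values, $\sigma_n(\mathcal{K}_p) \geq \sigma_n([\,0\;\;\mathbf{K}_{ss}\,]) - \|\mathcal{K}_p - [\,0\;\;\mathbf{K}_{ss}\,]\| = \sigma_n(\mathbf{K}_{ss}) - o(1)$. Since $\rank(\mathbf{K}_p) = n$ and $\mathbf{K}_{ss}$ captures $\mathbf{K}_p$ up to exponentially small terms, $\sigma_n(\mathbf{K}_{ss}) > 0$ for $p$ large (again Weyl), so for $p$ sufficiently large $\sigma_n(\mathcal{K}_p) \geq \tfrac{1}{2}\sigma_n(\mathbf{K}_{ss}) > 0$. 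For the first inequality, I would apply the same argument after left-multiplying by $\mathcal{O}_f$: $\|G - \mathcal{O}_f[\,0\;\;\mathbf{K}_{ss}\,]\| \leq \|\mathcal{O}_f\|\cdot o(1) \to 0$, and $\sigma_n(\mathcal{O}_f[\,0\;\;\mathbf{K}_{ss}\,]) = \sigma_n(\mathcal{O}_f\mathbf{K}_{ss}) > 0$ because $\rank(\mathcal{O}_f) = n$ and $\mathbf{K}_{ss}$ has rank $n$ for large $p$ (its column space equals that of $\mathbf{K}_p$, which is $\mathbb{R}^n$); hence $\sigma_n(G) \geq \tfrac{1}{2}\sigma_n(\mathcal{O}_f\mathbf{K}_{ss}) > 0$.

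The main obstacle I anticipate is making the statement "$\mathbf{K}_{ss}$ has rank $n$ for sufficiently large $p$" fully rigorous: one must show that truncating $\mathbf{K}_p$ to its last $k$ block columns does not destroy the rank-$n$ property. This follows because the discarded blocks are $\mathcal{O}(e^{-c_0 p})$ in norm while $\sigma_n(\mathbf{K}_p)$ — which equals $\sigma_n$ of a matrix with a fixed number ($p$, but the dominant contribution from a bounded number) of blocks — can be bounded below uniformly once $p$ exceeds a threshold; alternatively one argues directly that $\mathbf{K}_{ss}$, which contains the blocks $(A-KC)^{k-1}K,\dots,K$, already spans $\mathbb{R}^n$ for $k \geq n$ under the controllability-type assumption on $(A-KC, K)$ implied by observability of $(A,C)$ and the structure of the Kalman filter. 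A secondary, purely bookkeeping, difficulty is carefully tracking which gain indices appear in the "last $k$ blocks" versus the truncated part, to ensure the two error estimates ($\mathcal{O}(e^{-c_0 p})$ truncation error and $o(1)$ steady-state approximation error within the retained blocks) are applied to the correct ranges of indices.
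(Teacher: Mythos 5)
Your overall plan (compare the time-varying quantities with their steady-state counterparts on the last $k$ block columns) is close to the paper's, but the central error estimate you rely on is false, and it breaks the argument as written. You claim $\|\mathcal{K}_{p}-[\,0\;\;\mathcal{K}_{ss}\,]\|=\mathcal{O}(e^{-c_{0}p})$ and $\|\mathbf{K}_{p}-[\,0\;\;\mathbf{K}_{ss}\,]\|=\mathcal{O}(e^{-c_{0}p})$. Proposition \ref{proposition:Convergence of (A-KC)} gives exponential decay in $p$ only for products whose starting gain index is \emph{fixed}, i.e., for blocks whose number of factors grows with $p$ (the leftmost blocks). The discarded blocks immediately to the left of the retained ones contain only about $k$ factors: the $(k+1)$-th block from the right of $\mathbf{K}_{p}$ is $(A-KC)^{k}K$, and the corresponding block of $\mathcal{K}_{p}$ is $(A-K_{p-1}C)\cdots(A-K_{p-k}C)K_{p-k-1}$, which converges to $(A-KC)^{k}K$ as $p\to\infty$. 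Since $k$ is fixed, these blocks are $\Theta(1)$ in general, not $o(1)$, so $\|\mathcal{K}_{p}-[\,0\;\;\mathbf{K}_{ss}\,]\|$ does not tend to zero, and your Weyl step $\sigma_{n}(\mathcal{K}_{p})\geq\sigma_{n}(\mathbf{K}_{ss})-\|\mathcal{K}_{p}-[\,0\;\;\mathbf{K}_{ss}\,]\|$ yields nothing: the subtracted term is a constant with no guaranteed relation to $\sigma_{n}(\mathbf{K}_{ss})$. The same objection applies to your claim that ``$\mathbf{K}_{ss}$ captures $\mathbf{K}_{p}$ up to exponentially small terms.''

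The missing idea is that the discarded columns need not be small at all: appending block columns can only increase the $n$-th largest singular value. Writing $G=[\,M\;\;\mathcal{O}_{f}\mathcal{K}_{tv}\,]$, where $\mathcal{K}_{tv}$ denotes the last $k$ block columns of $\mathcal{K}_{p}$, one has $GG^{*}\succeq\mathcal{O}_{f}\mathcal{K}_{tv}\mathcal{K}_{tv}^{*}\mathcal{O}_{f}^{*}$, hence $\sigma_{n}(G)\geq\sigma_{n}(\mathcal{O}_{f}\mathcal{K}_{tv})$ (and similarly $\sigma_{n}(\mathcal{K}_{p})\geq\sigma_{n}(\mathcal{K}_{tv})$); this is exactly how the paper proceeds. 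The perturbation argument is then applied only \emph{within} the retained $k$ blocks, where it is valid: the gains appearing there have indices $p-k,\dots,p-1$ and converge to $K$, each block has at most $k-1$ factors with $k$ fixed, so $\|\mathcal{K}_{tv}-\mathbf{K}_{ss}\|\to 0$ and $\sigma_{n}(\mathcal{O}_{f}\mathcal{K}_{tv})\geq\sigma_{n}(\mathcal{O}_{f}\mathbf{K}_{ss})-\|\mathcal{O}_{f}\|\,\|\mathcal{K}_{tv}-\mathbf{K}_{ss}\|$. Finally, positivity of $\sigma_{n}(\mathcal{O}_{f}\mathbf{K}_{ss})$ needs no largeness of $p$ ($\mathbf{K}_{ss}$ does not depend on $p$): since $k\geq n$, the Cayley--Hamilton theorem gives that the column space of $\mathbf{K}_{ss}=[(A-KC)^{k-1}K\;\cdots\;K]$ equals that of $\mathbf{K}_{p}$, so $\rank(\mathbf{K}_{ss})=n$, which together with $\rank(\mathcal{O}_{f})=n$ gives $\sigma_{n}(\mathcal{O}_{f}\mathbf{K}_{ss})>0$; your ``alternative'' justification in the last paragraph is the correct one, whereas your primary one (truncation up to exponentially small terms) is not.
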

\begin{proof}
We will only show the first inequality as the second one is similar. Recall the definition of $G$ in \eqref{True G}. We can rewrite $G=[M\quad \mathcal{O}_{f}\mathcal{K}_{tv}]$, where $\mathcal{K}_{tv}$ is the matrix formed by the last $k$ block columns of  $\mathcal{K}_{p}$, and $M$ is some residual matrix. We have
\begin{equation*}
\begin{aligned}
GG^{*}\succeq \mathcal{O}_{f}\mathcal{K}_{tv}\mathcal{K}^{*}_{tv}\mathcal{O}_{f}^{*}.
\end{aligned}
\end{equation*}
Hence, we have
\begin{equation*}
\begin{aligned}
\sigma_{n}(G)&\geq\sigma_{n}(\mathcal{O}_{f}\mathcal{K}_{tv})=\sigma_{n}(\mathcal{O}_{f}\mathbf{K}_{ss}+\mathcal{O}_{f}\mathcal{K}_{tv}-\mathcal{O}_{f}\mathbf{K}_{ss})\\
&\geq\sigma_{n}(\mathcal{O}_{f}\mathbf{K}_{ss})-\|\mathcal{O}_{f}\|\|\mathcal{K}_{tv}-\mathbf{K}_{ss}\|,
\end{aligned}
\end{equation*}
where the last inequality comes from the application of \cite[Theorem~3.3.16(c)]{horn1991topics}. From Kalman filtering theory, the Kalman gain $K_{t}$ converges to its steady state $K$ under Assumption \ref{asm: standard}. Consequently, we have $\|\mathcal{K}_{tv}-\mathbf{K}_{ss}\|$ converges to zero as $p$ increases.  We see that $\sigma_{n}(G)$ will eventually be lower bounded by $\frac{\sigma_{n}(\mathcal{O}_{f}\mathbf{K}_{ss})}{2}>0$ as $p$ increases, where the inequality comes from the assumption that $\mathcal{O}_{f}$ and $\mathbf{K}_{p}$ have full rank and the Cayley-Hamilton Theorem. 
\end{proof}

\bibliographystyle{IEEEtran}
\bibliography{main}

\end{document}